\title{Reasoning about Knowledge and Strategies under Hierarchical Information}
\author{Bastien Maubert \ and Aniello Murano\\
Universit\`a degli Studi di Napoli ``Federico II''}
\newcommand\UElogo{%
\begin{tikzpicture}[remember picture,overlay]
\node[anchor=south,yshift=1.5cm,xshift=0cm] at (current page.south) {\includegraphics[height=2.5em]{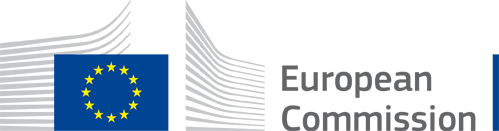}};
\end{tikzpicture}%
}
\begin{document}
\maketitle

\begin{abstract}
Two distinct semantics have been considered  for knowledge in the
context of strategic reasoning, depending on whether players know each
other's strategy or not. 
In the former case, that we call the \emph{informed} semantics,  distributed synthesis
for epistemic temporal specifications is  undecidable, already on systems with hierarchical
information. However, for the other, \emph{uninformed} semantics, the problem is
decidable on such systems. 
In this work we  generalise this result by introducing an
epistemic extension of Strategy Logic with imperfect information. The
semantics of knowledge operators is uninformed, and captures  agents that
can change  observation power when they change strategies. We solve the 
model-checking problem on a class of ``hierarchical instances'',
which provides a solution to a vast class of
strategic problems with epistemic temporal specifications, such as distributed or
rational synthesis, on
hierarchical systems. 
\end{abstract}

\section{Introduction}
\label{sec-intro}

Logics of programs, many of which  are based on  classic temporal logics
such as \LTL or \CTL, are meant to specify desirable properties of
algorithms.
When considering distributed systems, an important aspect is that
each processor has a partial, local view of the whole system, and can
only base its actions on the available information.
Many authors have argued that this imperfect information calls for
logical formalisms that would allow the modelling and reasoning about what different
processes know of the system, and of other processes'
state of knowledge. For instance, Halpern and Moses wrote
in~\cite{halpern1990knowledge}:

\say{[\ldots] explicitly reasoning about the states of knowledge of the
components of a distributed system provides a more general and uniform setting that
offers insight into the basic structure and limitations of protocols in a given system.}



To reason about knowledge and
time, temporal logics have been extended with the knowledge operator
$\K_a$ from epistemic logic, giving rise to a family of temporal
epistemic logics~\cite{fagin1995reasoning}, 
which have been applied 
to, \eg, information flow and cryptographic
protocols~\cite{DBLP:conf/csfw/MeydenS04,DBLP:journals/jcs/HalpernO05},
 coordination problems in distributed systems~\cite{neiger1992using} and
 motion planning in robotics~\cite{brafman1997applications}.

Distributed systems are often open systems, \ie, they interact with an environment and must react
appropriately to actions taken by this environment. 
As a result,
if we take the analogy
where processors are players of a game, and processes  are strategies
for the processors,
the task of synthesising distributed protocols can be seen as synthesising
winning strategies in multi-player games with imperfect information.
This analogy between the two settings is
well known, and Ladner and Reif already wrote
in~\cite{ladner1986logic} that \say{Distributed protocols are
  equivalent to (\ie, can be formally modelled as) games}. 

To reason about a certain type of game-related properties in
distributed systems, 
Alternating-time Temporal Logic
(\ATL) was introduced~\cite{DBLP:journals/jacm/AlurHK02}. It can express the
existence of strategies for coalitions of players in multi-player
games,
but cannot express some
important game-theoretic concepts, such as the existence of Nash
equilibria. To remedy this, Strategy Logic
(\SL)~\cite{DBLP:journals/iandc/ChatterjeeHP10,DBLP:journals/tocl/MogaveroMPV14}
was defined. Treating strategies as explicit first-order objects makes
it very expressive, and it can for instance talk about Nash
equilibria 
in a
very natural way.  These logics have been studied both for 
 perfect and imperfect information, and
in the latter case either with the assumption that agents have no
memory, or that they remember everything they observe. This last
assumption, called \emph{perfect recall}, is the one usually
considered in distributed
synthesis~\cite{PR90} 
and games with imperfect
information~\cite{reif84}, 
and it is also central in logics of knowledge and
time~\cite{fagin1995reasoning}. 
 It
is the one we consider in this work.

In order to reason about knowledge and strategic abilities in
distributed systems, epistemic temporal logics and strategic logics
have been combined. In particular, both \ATL and \SL have been
extended with knowledge
operators~\cite{Hoek03ATELstudialogica,DBLP:journals/fuin/JamrogaH04,epistemicSL,dima2010model,DBLP:conf/ijcai/BelardinelliLMR17,BelardinelliLMR17a}. However,
few decidable cases are known for the model checking of these logics
with imperfect information and perfect recall. This is not surprising
since strategic logics typically can express the existence of
distributed strategies, a problem known to be
undecidable for perfect recall, already for purely temporal
specifications~\cite{DBLP:conf/focs/PetersonR79,PR90}.  

 \head{Semantics of knowledge with strategies} 
Mixing knowledge and strategies raises intricate questions of
semantics. As a matter of fact, we find in the litterature two distinct semantics
for epistemic operators in strategic contexts, one in works 
on distributed synthesis from epistemic temporal specifications,  and
another one  in epistemic strategic logics. To explain
 in what they differ, let us first recall the semantics of the
 knowledge operator in epistemic temporal logics: a formula
 $\Ka\phi$ holds in a history $h$ (finite sequence of states) of a
 system if $\phi$ holds in all histories $h'$ that agent $a$ cannot
 distinguish from $h$.
 In other words, agent
 $a$ knows that $\phi$ holds if it holds in all histories  that  may
 be the current one according to what she observed. Now consider that the system is a
 multi-player game, and fix a strategy $\strat_b$ for some player $b$. 
 Two  semantics are possible for $\Ka$: one could say
 that $\Ka\phi$ holds if $\phi$ holds in
 all possible histories that are indistinguishable to the
 current one, as in epistemic temporal logics, or one could restrict attention to those in which player $b$
 follows $\strat_b$, discarding indistinguishable histories that are
 not consistent with $\strat_b$. 
In the
 latter case, one may say that player $a$'s knowledge is \emph{refined} by the knowledge of
 $\strat_b$, \ie, she knows that player $b$ is using strategy
 $\strat_b$, and she has the ability to refine her knowledge with this
 information, eliminating inconsistent possible histories. In the
 following this is what we will be referring to when saying that an
 agent \emph{knows} some other agent's strategy.
We shall also refer to the semantics where $a$ knows
 $\strat_b$ as the \emph{informed semantics}, and that in which she
 ignores it as the \emph{uninformed semantics}. 

 The two semantics are relevant as they model different reasonable
scenarios. For instance if players collaborate and have the ability to
communicate, they may share their strategies with each other.
But in many cases, components of a distributed system each receive
only their own strategy, and thus are ignorant of other
components' strategies.

All epistemic extensions of \ATL and \SL we know of consider the
uninformed
semantics.  
 In
contrast, works
on distributed synthesis from
epistemic temporal specifications use the informed
semantics~\cite{van1998synthesis,DBLP:conf/concur/MeydenW05}, even
though it is not so obvious that they do.
Indeed these works consider specifications in classic epistemic
temporal logic, without strategic operators. But they ask for the
existence of distributed  strategies so that  such specifications hold
in the system restricted to the outcomes of these strategies.
This corresponds to what we call the
informed semantics, as the semantics of knowledge operators is, in
effect, restricted to outcomes of the strategies that are being synthesised.

 We only know of two works that discuss these two
 semantics. In~\cite{DBLP:journals/iandc/BozzelliMP15} two
 knowledge-like operators are studied, one for each semantics, but
 distributed synthesis is not considered. More interestingly, Puchala
 already observes in~\cite{DBLP:conf/mfcs/Puchala10} that the
 distributed synthesis problem studied
 in~\cite{van1998synthesis,DBLP:conf/concur/MeydenW05} considers the
 informed semantics.  While the problem is undecidable for this
 semantics even on hierarchical
 systems~\cite{DBLP:conf/concur/MeydenW05}, Puchala sketches a proof that
 the problem becomes
 decidable on the class of hierarchical systems  when the uninformed
 semantics is used. 

\head{Contributions} We introduce a  logic for reasoning
about knowledge and strategies.  Our Epistemic Strategy Logic (\ESL) is based on 
Strategy Logic, and besides boolean and
temporal operators,  it contains the imperfect-information strategy quantifier $\Estrato{\obs}$
from \SLi~\cite{BMMRV17}, which reads as  ``there exists a strategy $x$ with
observation $\obs$'', and epistemic operators $\Ka$ for
each agent $\ag$. Our logic allows reasoning about  agents whose means of
observing the system changes over time, as agents may  successively use strategies
associated with different observations. This can model, for instance,
an agent that  is granted higher security clearance, giving her
access to previously hidden information.
The semantics of our epistemic operators
takes into account agents' changes of observational power. 
\ESL also contains the outcome quantifier $\A$
from Branching-time Strategy Logic (\BSL)~\cite{knight2015dealing},
which quantifies on outcomes of strategies currently used by the agents, and
the unbinding operator $\bind{\unbind}$, which frees an agent 
from her current strategy. The latter was introduced
in~\cite{DBLP:journals/iandc/LaroussinieM15} for \ATL with strategy
context and is also present in \BSL.
The outcome quantifier together with the unbinding operator allow us to express
branching-time temporal properties without resorting to artificial
strategy quantifications which may either affect the semantics of agents'
knowledge or break hierarchicality.

We solve the model-checking problem for
 \emph{hierarchical instances} of \ESL.
As in \SLi, hierarchical instances are formula/model pairs such that, as
one goes down the syntactic tree of the formula, observations
annotating strategy quantifiers $\Estrato{\obs}$ can only become
finer. In addition, in \ESL we require that knowledge operators do not  
 refer to (outcomes of) strategies quantified higher in the formula.

Any problem which can be expressed as
hierarchical instances of our logic is thus decidable, and since \ESL is 
very expressive such problems are many.
A first corollary is an alternative proof that
 distributed synthesis from epistemic
temporal specifications with uninformed semantics  is decidable on hierarchical systems. Puchala
 announced this result in~\cite{DBLP:conf/mfcs/Puchala10}, but we provide a
stronger result by going from linear-time to branching-time epistemic
specifications. We also allow for nesting of strategic operators in
epistemic ones, as long as hierarchicality is preserved and epistemic
formulas do not refer to previously quantified  strategies. 
Another corollary is  
that rational synthesis~\cite{DBLP:journals/amai/KupfermanPV16,DBLP:conf/icalp/ConduracheFGR16} with imperfect
information is decidable on hierarchical systems for epistemic temporal
objectives with uninformed semantics. 

Our approach to solve the model-checking problem for our logic extends
that followed in~\cite{DBLP:journals/iandc/LaroussinieM15,BMMRV17},
which consists in ``compiling'' the strategic logic under study into
an opportune variant of Quantified
\CTLs, or \QCTLs for
short~\cite{DBLP:journals/corr/LaroussinieM14}. This is an extension of \CTLs
with second-order quantification on propositions which serves as an
intermediary, low-level logic between strategic logics and tree
automata. In~\cite{DBLP:journals/iandc/LaroussinieM15}, model checking \ATLs
with strategy context is proved decidable by reduction to \QCTLs.
In ~\cite{BMMRV17}, model checking \SLi is proved decidable for a
class of
hierarchical instances by reduction to the hierarchical fragment of an imperfect information extension of
\QCTLs, called \QCTLsi. In this work we define \EQCTLsi, which extends further \QCTLsi with epistemic
operators and an operator of observation change introduced recently
in~\cite{barriere:2018} in the context of epistemic temporal
logics. 
We define the hierarchical fragment of \EQCTLsi, which strictly contains  that of
\QCTLsi, and solve its model-checking problem for this 
fragment. 

\vspace{1ex} \head{Related work} 
We know of five other logics called Epistemic Strategy Logic. 
In~\cite{huang2014temporal},  epistemic temporal logic is extended
with a first-order quantification
$\exists x$ on points in runs of the system and
an operator $\bm{e}_i(x)$ that  compares local state $i$  at $x$ and
at the current point. When interpreted on systems where strategies are
encoded in local states, this logic can express existence of
strategies and what agents know about it. However it only concerns
memoryless strategies. Strategy Logic is extended with epistemic
operators in~\cite{DBLP:conf/cav/CermakLMM14}, but they also consider
memoryless agents. \cite{epistemicSL}
extends a fragment of \SL 
with epistemic operators, and considers perfect-recall strategies, but
 model checking is not studied.  
The latter logic is extended in~\cite{DBLP:conf/ijcai/BelardinelliLMR17}, in
which its model-checking problem is solved on the class of broadcast
systems. 
In \cite{knight2015dealing}  \SL is also extended with epistemic operators and
 perfect-recall agents.
Their logic does not  require strategies to be uniform, but 
 this requirement can be expressed in the language. However no
 decidability result is provided. The result we present here is the
 first for an epistemic strategic logic with perfect recall on
 hierarchical systems. In addition, ours is the first  epistemic
 strategic logic to
 allow for changes of observational power.

\vspace{1ex}
\head{Plan} We first define \ESL and hierarchical instances, and
announce our main result. Next we introduce \EQCTLsi and
solve the
model-checking problem for its hierarchical fragment.
We then establish our main result by reducing model checking hierarchical instances
of \ESL to model checking hierarchical \EQCTLsi. We finally present two
corollaries, exemplifying what our logic can express, and we finish with a discussion
on the semantics of knowledge and strategies.

\subsection{Notations}
Let $\Sigma$ be an alphabet. A \emph{finite} (resp. \emph{infinite}) \emph{word} over $\Sigma$ is an element
of $\Sigma^{*}$ (resp. $\Sigma^{\omega}$).  
The \emph{length} of a finite word $w=w_{0}w_{1}\ldots
w_{n}$ is $|w|\egdef n+1$,  $\last(w)\egdef w_{n}$ is its last
letter, and we note $\eword$ for the empty word.
Given a finite (resp. infinite) word $w$ and $0 \leq i \leq |w|$ (resp. $i\in\setn$), we let $w_{i}$ be the
letter at position $i$ in $w$, $w_{\leq i}$ is the prefix of $w$ that
ends at position $i$ and $w_{\geq i}$ is the suffix of $w$ that starts
at position $i$.
We write $w\pref w'$ if $w$ is a prefix of $w'$, and $\FPref{w}$ is
the set of finite prefixes of word $w$.
Finally, 
the domain of a mapping $f$ is written $\dom(f)$,  and for $n\in\setn$ we let $[n]\egdef\{i \in \setn: 1 \leq i \leq n\}$.

We fix for  the rest of the paper a number of
parameters for our logics and models: $\APf$ is a finite set of
\emph{atomic propositions}, $\Agf$ is a finite set of \emph{agents} or
\emph{players},
$\Varf$ is a finite set of \emph{variables} and $\Obsf$ is a finite
set of \emph{observation symbols}.
These data are implicitly
  part of the input for the model-checking problems we consider.

  \section{Epistemic Strategy Logic}
\label{sec-ESL}

In this section we introduce our epistemic extension of Strategy Logic
with imperfect information.

\subsection{Models}
\label{sec-SLmodels}
 


The models of \ESL are essentially the same as those of \SLi, i.e., concurrent game structures
extended by an observation interpretation $\obsint$, that maps each observation
symbol 
$\obs\in\Obsf$ to an equivalence relation $\obsint(\obs)$ over positions of the
game structure.  
However models in \ESL  contain, in addition, an initial
observation for each player. This initial observation may change if the player receives a
strategy corresponding to a different observation.


\begin{definition}[\CGSi]
  \label{def-CGSi}
  A \emph{concurrent game structure with imperfect information} (or
  \CGSi for short)
is a structure  $\CGSi=(\Act,\setpos,\trans,\val,\obsint,\posinit,\obsinit)$ where
   \begin{itemize}
    \item $\Act$ is a finite non-empty set of \emph{actions},
    \item $\setpos$ is a finite non-empty set of \emph{positions},
   \item $\trans:\setpos\times \Mov^{\Agf}\to \setpos$ is a \emph{transition function}, 
  \item $\val:\setpos\to 2^{\APf}$ is a \emph{labelling function},  
  \item $\obsint:\Obsf\to \setpos\times\setpos$ is an 
  \emph{observation interpretation}, and for each $\obs\in\Obsf$, 
  $\obsint(\obs)$ is an equivalence relation,
  \item $\posinit \in \setpos$ is an \emph{initial position}, and
 \item $\obsinit\in \Obsf^\Agf$ is a tuple of \emph{initial observations}. 
\end{itemize}
\end{definition}

Two positions being equivalent for relation
$\obsint(\obs)$ means that a player
using a strategy with observation $\obs$ cannot distinguish them.
In the following we may  write $\obseq$ for $\obsint(\obs)$ and $\pos\in\CGS$ for $\pos\in\setpos$.



\vspace{1ex}
\head{Joint actions}
When in a position $\pos\in\setpos$, each player $\ag$ chooses an action $\mova\in\Mov$
and the game proceeds to position
$\trans(\pos, \jmov)$, where $\jmov\in \Mov^{\Agf}$ stands for the \emph{joint action}
$(\mova)_{\ag\in\Agf}$. If
$\jmov=(\mova)_{\ag\in\Agf}$, we let
$\jmov_{\ag}$ denote $\mova$ for $\ag\in\Agf$. 

\vspace{1ex}
\head{Plays and strategies}
A \emph{finite} (resp. \emph{infinite}) \emph{play} is a finite (resp. infinite)
word $\fplay=\pos_{0}\ldots \pos_{n}$ (resp. $\iplay=\pos_{0} \pos_{1}\ldots$)
such that $\pos_{0}=\pos_{\init}$ and for all $i$ with $0\leq i<|\fplay|-1$ (resp. $i\geq 0$), there exists a joint action $\jmov$
such that $\trans(\pos_{i}, \jmov)=\pos_{i+1}$. 
We let $\setplays$ be the set of finite plays. 
A \emph{strategy} is a function $\strat:\setplays \to \Mov$, and we let
$\setstrat$ be the set of all strategies.

\vspace{1ex}
\head{Assignments}
An \emph{assignment}  $\assign:\Agf\cup\Varf \partialto \setstrat$ is
a partial function assigning to
each  player and variable in its domain a strategy.
For an assignment
$\assign$, a player $\ag$ and a strategy $\strat$,
$\assign[\ag\mapsto\strat]$ is the assignment of domain
$\dom(\assign)\cup\{\ag\}$ that maps $\ag$ to $\strat$ and is equal to
$\assign$ on the rest of its domain, and 
$\assign[\var\mapsto \strat]$ is defined similarly, where $\var$ is a
variable; also, $\assign[\ag\mapsto\unbind]$ is
 the assignment of domain $\dom(\assign)\setminus\{\ag\}$, on which it is
 equal to $\assign$.

 \vspace{1ex}
 \head{Outcomes}
For an assignment $\assign$ and a finite play $\fplay$, we let
$\out(\assign,\fplay)$ be the set of infinite plays that start with
$\fplay$ and are then extended by letting players follow the strategies
assigned by $\assign$. Formally,
 $\out(\assign,\fplay)$ is the set of infinite plays of the form $\fplay \cdot
 \pos_{1}\pos_{2}\ldots$ such that for all $i\geq 0$, there exists
 $\jmov$ such that for all $\ag\in\dom(\assign)\cap\Agf$,
 $\jmov_\ag\in\assign(\ag)(\fplay\cdot\pos_{1}\ldots\pos_i)$ and 
 $\pos_{i+1}=\trans(\pos_{i},\jmov)$, with 
 $\pos_{0}=\last(\fplay)$.



 \vspace{1ex}
\head{Synchronous perfect recall}
Players with perfect recall 
remember the whole history of a play. 
Each observation
 relation is thus extended to finite plays
as follows: $\fplay \obseqh \fplay'$ if $|\fplay|=|\fplay'|$
and $\fplay_{i}\obseq\fplay'_{i}$ for every $i\in\{0,\ldots,
|\fplay|-1\}$.

\vspace{1ex}
\head{Uniform strategies} For $\obs\in\Obsf$, an \emph{$\obs$-strategy} is a strategy $\strat:\setpos^{+} \to \Mov$ 
such that $\strat(\fplay)=\strat(\fplay')$ whenever $\fplay \obseqh
\fplay'$. 
For $\obs\in\Obsf$ we let $\setstrato$ be the set of
all $\obs$-strategies. 

\subsection{Syntax}
\label{sec-ESL-definition}

 \ESL   extends \SLi with knowledge
  operators $\Ka$     
  for each agent $\ag\in\Agf$, and the 
 \emph{outcome quantifier} from
Branching-time Strategy Logic, introduced in~\cite{knight2015dealing},
which quantifies on outcomes of the currently fixed strategies.
While in \SL  temporal operators
could only be evaluated in contexts where all agents were assigned to a
strategy, this outcome quantifier allows for evaluation of
(branching-time) temporal properties on partial assignments of
strategies to agents. This outcome quantifier can be simulated in
usual, linear-time variants of Strategy Logic, by quantifying on
strategies for agents who do not currently have one. But in the context of imperfect information, where strategy
quantifiers are parameterised by an observation, this may cause to
either break the hierarchy or artificially modify an agent's
observation, which affects his knowledge.


\begin{definition}[\ESL Syntax]
  \label{def-ESL}
  The syntax of \ESL is defined by the following grammar:
    \begin{align*}
 \phi &: p \mid  \neg\phi \mid \phi\ou\phi   \mid
\Estrato{\obs} \phi \mid \bind{\var}\phi \mid \bind{\unbind}\phi  \mid \Ka \phi \mid \Aout \psi\\
\psi &: \phi \mid \neg \psi \mid
  \psi \ou \psi \mid \X \psi \mid \psi \until \psi, \end{align*}
where $p\in\APf$, $\var\in\Varf$, $\obs\in\Obsf$ and $a\in\Agf$. 
\end{definition}

Formulas of type $\phi$ are called \emph{history formulas}, those of
type $\psi$ are \emph{path formulas}. 
We may use usual  abbreviations $\top\egdef p\ou\neg p$, $\perp\egdef\neg\top$, $\phi\impl\phi'\egdef \neg \phi \ou \phi'$,
$\F\phi \egdef \top \until \phi$,  $\G\phi \egdef \neg \F
\neg \phi$,  $\Astrato{\obs}\phi\egdef\neg\Estrato{\obs}\neg\phi$ and $\E\psi\egdef \neg\A\neg\psi$.

A variable $\var$ appears \emph{free} in a formula $\phi$ if it
appears out of the scope of a strategy quantifier. 
We let $\free(\phi)$ be the set of variables that appear
free in $\phi$.
An assignment $\assign$ is \emph{variable-complete} for a formula
$\phi$ if its domain contains all free variables in $\phi$.
Finally, a \emph{sentence} is a history formula $\phi$ such that $\free(\phi)=\emptyset$.
\begin{remark}
  \label{rem-obs-strat}
  Without loss of generality we can assume that each strategy variable
$\var$ is quantified at most once in an \ESL formula. Thus, each
variable $\var$ that appears in a
sentence is uniquely associated 
to a strategy quantification $\Estrato{\obs}$,
and we let $\obs_\var=\obs$.
\end{remark}

\vspace{1ex} \head{Discussion on the syntax} In \SLi as
  well as in \ESL, the observation used by a strategy is specified at
  the level of strategy quantification: $\Estrato{\obs}\phi$ reads as
  ``there exists a strategy $\var$ with observation $\obs$ such that
  $\phi$ holds''.  When a strategy with observation $\obs$ is assigned
  to some agent $\ag$ via a binding $(\ag,\var)$, it seems natural to
  consider that agent $\ag$ starts observing the system with
  observation $\obs$.  As a result agents can change observation when
  they change strategy, and thus they can observe the game with
  different observation powers along a same play. This
  contrasts with most of the literature on epistemic temporal logics,
  where each agent's observation power is usually fixed in the model. 


 \subsection{Epistemic relations with changing observations}
 \label{sec-perf-rec}

Dynamic observation change has been studied recently in the context of
epistemic temporal logics in~\cite{barriere:2018}, from which come the
following definitions. 

First, dealing with the possibility to dynamically change  observation  requires to remember which observation each agent had at
each point in time. 

\vspace{1ex}
\head{Observation records} An \emph{observation record} $\reca$ for
agent $a\in\Agf$ is a finite word over $\setn\times\Obsf$, \ie,
$\reca\in (\setn\times\Obsf)^*$.

If at time $n$, an agent $\ag$ with
current observation record $\reca$ receives a strategy with
observation $\obs$, her new observation record is
$\append{(\obs,n)}$. The observation record $\rn{n}$ is the projection of
$\reca$ on $\{n\}\times\Obsf$, and represents the sequence of
observation changes that occurred at time $n$.

Given  an observation record for each agent  $\rec=(\reca)_{a\in\Agf}$,
 we note $\rec_a$ for $\reca$.
We say that an observation record $\rec$ \emph{stops at time $n$} if $\rn{m}$
is empty for all $m>n$, and $\rec$ \emph{stops at a finite play
  $\fplay$} if it stops at time $|\fplay|-1$.
If $\rec$ stops at time $n$, we let $\recadd{n}{\obs}$ be the
 observation record $\rec$ where $\rec_a$ is replaced with
 $\rec_a\cdot (n,\obs)$.

At each step of a play, each agent observes the new position with her
current observation power. Then, if an agent changes strategy, she
observes the same position with the observation of the new strategy,
which may be different from the previous one. 
Also, due to the syntax of \ESL,
an agent may change observation  several times before the next step.
Therefore, the 
 \emph{observation sequence} $\obslista(\rec,n)$ with which  agent $\ag$ observes the game at time $n$ consists of the
observation she had when the $n$-th move is taken,
plus those corresponding to strategy changes that occur before the
next step. It is defined by
induction on $n$:
\[
\begin{array}{l}
\obslista(\rec,0)=   o_1 \cdot \ldots \cdot o_k,\\[1pt]
 \hfill \mbox{if }  \rvecn{0}=(0,\obs_1)\cdot\ldots\cdot(0,\obs_k), \mbox{ and}\\[2pt]
 \obslista(\rec,n+1)= \last(\obslista(\rec,n))\cdot o_1 \cdot \ldots \cdot o_k,\hspace{1.5cm}\\[1pt]
\hfill \mbox{if } \rvecn{n+1}=(n+1,\obs_1) \cdot \ldots \cdot (n+1,\obs_k).
\end{array}
\]

If at time $n$ agent $a$ does not receive a new strategy, $\obslista(\rec,n)$ contains only one observation,
which will be either that of the last strategy
taken by the agent or the agent's  initial observation, given by the \CGSi. 

The  indistinguishability
relation for synchronous perfect-recall  with observation change is defined as follows. 

\begin{definition}
  \label{def-ch-relation}
  For $\fplay$ and $\fplay'$ two finite plays and $\rec$
an observation record, $\fplay$ and $\fplay'$ are observationally
equivalent to agent $a$, written $\fplay \eqha{\rec} \fplay'$, if $|\fplay|=|\fplay'|$
and,  for every $i\in\{0,\ldots,
|\fplay|-1\}$, for every $\obs\in\obslista(\rec,i)$, $\fplay_{i}\obseq[\obs]\fplay'_{i}$.
\end{definition}

\begin{remark}
  \label{rem-order-obs}
Observe that, at a given point in time, the order in which an agent
observes the game with different observation does not
matter. Intuitively, all that matters is the total 
information gathered before the next step. Also, in the case of an empty observation record,
the above definition corresponds to blind agents, for which all finite
plays of same length are indistinguishable. However in the following
observation records will never be empty, but will always be initialised with
the initial observations given by the model.
\end{remark}

\subsection{Semantics}
\label{sec-semantics-ESL}

We now define the semantics of \ESL. 

\begin{definition}[\ESL Semantics]
  \label{def-ESL-semantics}
  The semantics of a history formula is defined on a game $\CGSi$
  (omitted below), an
assignment  $\assign$  variable-complete for $\varphi$, and a
finite play $\fplay$. For a path formula $\psi$, the finite play is
replaced with an infinite play $\iplay$ and an index $i\in\setn$. The
definition  is as follows: 
\[
\begin{array}{lcl}
 \assign,\rec,\fplay\modelsSL p & \text{if} & p\in\val(\last(\fplay))\\[1pt]
 \assign,\rec,\fplay\modelsSL \neg\varphi & \text{if} &
  \assign,\rec,\fplay\not\modelsSL\varphi\\[1pt]
 \assign,\rec,\fplay\modelsSL \varphi\vee\varphi' & \text{if} &
  \assign,\rec,\fplay\modelsSL\varphi \;\text{ or }\;
  \assign,\rec,\fplay\modelsSL\varphi' \\[1pt]
 \assign,\rec,\fplay\modelsSL\Estrato{\obs} \varphi  & \text{if} & 
\exists   \strat\in\setstrato. \;
    \assign[\var\mapsto\strat],\rec,\fplay\modelsSL \varphi\\[1pt]
 \assign,\rec,\fplay\modelsSL \bind{\var}\varphi & \text{if} &
 \assign[\ag\mapsto\assign(\var)],\rec,\fplay\modelsSL \varphi\\[1pt]  
 \assign,\rec,\fplay\modelsSL \bind{\unbind}\varphi & \text{if} &
                                                                    \assign[\ag\mapsto\unbind],\rec,\fplay\modelsSL \varphi\\[1pt]
\assign,\rec,\fplay\modelsSL \Ka\phi & \text{if} &
                                                   \forall\fplay'\in\setplays \text{ s.t. }\fplay'\eqha{\rec}\fplay,\\
  & & \assign,\rec,\fplay'\modelsSL\phi \\[1pt]

  \assign,\rec,\fplay\modelsSL \Aout\psi & \text{if} & \forall\iplay
                                                       \in
                                                       \out(\assign,\fplay),\\
  & &\assign,\rec,\iplay,|\fplay|-1\modelsSL  \psi \\[5pt]
\assign,\rec,\iplay,i\modelsSL \varphi & \text{if} &
                                                         \assign,\rec,\iplay_{\leq i}\modelsSL\varphi\\[1pt]
   \assign,\rec,\iplay,i\modelsSL \neg\psi & \text{if} &
  \assign,\rec,\iplay,i\not\modelsSL\psi\\[1pt]
 \assign,\rec,\iplay,i\modelsSL \psi\vee\psi' & \text{if} &
  \assign,\rec,\iplay,i\modelsSL\psi \;\text{ or }\;
  \assign,\rec,\iplay,i\modelsSL\psi' \\[1pt]
  \assign,\rec,\iplay,i\modelsSL\X\psi & \text{if} &
  \assign,\rec,\iplay,i+1\modelsSL\psi\\[1pt]
\assign,\rec,\iplay,i\modelsSL\psi\until\psi' & \text{if} & \exists\, j\geq i
   \mbox{ s.t. }\assign,\rec,\iplay,j\modelsSL \psi' \mbox{ and }   \\[1pt]
& &   \forall\, k \in [i,j[,
\; \assign,\rec,\iplay,k\modelsSL \psi
\end{array}
\]
\end{definition}

The satisfaction of a sentence is independent of the
assignment; for an \ESL sentence $\phi$ we thus let
$\CGSi,\rec,\fplay\models\phi$ if $\CGSi,\assign,\rec,\fplay\models\phi$ for some
assignment $\assign$. We also write $\CGSi \models \phi$ if
$\CGSi,\eobsrecvec, \posinit \models \phi$, where $\eobsrecvec=(0,\obsinit_\ag)_{\ag\in\Agf}$.


\vspace{1ex}
\bmchanged{\head{Discussion on the semantics}
  First, 
 the semantics of the knowledge
operator corresponds, as announced, to what we called \emph{uninformed semantics} in
the introduction. Indeed it is not restricted to outcomes of strategies
followed by the players: $\Ka\phi$ holds in a finite play $\fplay$ if $\phi$
holds in \emph{all finite plays in the game} that are indistinguishable to
$\fplay$ for agent $a$. 

Also, note that the relation for perfect recall with observation
change, and thus also observation records, are only used in the semantics of the knowledge operators. As
usual, a strategy with observation $\obs$ has to be uniform with
regards to the classic perfect-recall relation $\obseq$ for static
observations, even if it is assigned to an agent who previously had a
different observation. The reasons to do so are twofold.

First, we do not see any other natural
definition. One may think of parameterising strategy quantifiers with
observation records instead of mere observations, but this would
require to know at the level of quantification at which points in time
the strategy will be given to an agent, and what previous observations
this agent would have had, which is not realistic.

More importantly,
when one asks for the existence of a uniform strategy after some
finite play $\fplay$, it only matters how the strategy is defined on
suffixes of $\fplay$, and thus the uniformity constraint also is
relevant only on such plays.
But for such plays, the ``fixed observation'' indistinguishability
relation is the same as the ``dynamic observation'' one. 
More precisely,  if agent $\ag$
receives  observation $\obs$ at the end of $\fplay$, \ie,
$\last(\rec_a)=(|\fplay|-1,\obs)$, then for all finite plays
$\fplay',\fplay''$ that are suffixes of $\fplay$, we have
$\fplay'\obseqh\fplay''$ if, and only if,
$\fplay'\eqha{\rec}\fplay''$. Indeed, since we use the S5 semantics of
knowledge, \ie, indistinguishability relations are equivalence
relations, the prefix $\fplay$ is always related to itself, be it for
$\eqha{\rec}$ or $\obseqh$, and after $\fplay$ both relations only
consider observation $\obs$.
}

\subsection{Model checking and hierarchical instances}


We now introduce the decision problem studied in
this paper, i.e., the model-checking problem for \ESL.

\vspace{1ex}
\head{Model checking}   An
\emph{instance} is a pair $(\Phi,\CGSi)$ where
$\Phi$ is a sentence of $\ESL$  and $\CGSi$ is a \CGSi.  The \emph{model-checking
  problem} for \ESL is the decision problem that, given an
instance $(\Phi,\CGSi)$, returns `yes' if $\CGSi\models\Phi$, and
`no' otherwise.

\SLi can be translated into \ESL, by adding outcome quantifiers
before temporal operators. Since model checking \SLi is undecidable~\cite{BMMRV17}, we get the following result:

\begin{theorem}
  \label{theo-undecidable-SLi}
  Model checking  \ESL is undecidable.
\end{theorem}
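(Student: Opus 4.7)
The plan is a reduction from the model-checking problem for \SLi, which is undecidable by \cite{BMMRV17}. \ESL already contains every ingredient of \SLi (atomic propositions, boolean connectives, the imperfect-information strategy quantifier $\Estrato{\obs}$, and the binding operator $\bind{\var}$), so the only obstacle is syntactic: \ESL cleanly separates history formulas from path formulas, whereas in \SLi temporal operators apply directly once every agent is bound. A uniform translation that re-inserts an outcome quantifier $\Aout$ in front of temporal operators therefore suffices.

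More precisely, I would define inductively a translation $\tau$ from \SLi to \ESL that commutes with boolean connectives, with strategy quantifications $\Estrato{\obs}$, and with bindings $\bind{\var}$, and that rewrites each maximal temporal subformula $\theta$ of the \SLi formula (which in \SLi can only be evaluated under a complete binding) as $\Aout \tau'(\theta)$, where $\tau'$ is a dual translation recursing through $\X$, $\until$ and booleans as a path formula and dropping back to $\tau$ on history subformulas. From an \SLi instance $(\Phi,\CGSi_{\SLi})$ one then obtains an \ESL instance $(\tau(\Phi),\CGSi)$, where $\CGSi$ is $\CGSi_{\SLi}$ equipped with an arbitrary tuple of initial observations $\obsinit$; their choice is immaterial since $\tau(\Phi)$ contains no knowledge operator, and observation records are only inspected by $\Ka$.

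Correctness is then a routine induction on $\Phi$ showing that $\CGSi_{\SLi},\assign,\fplay \modelsSL \Phi$ iff $\CGSi,\assign,\rec,\fplay \modelsSL \tau(\Phi)$ for every observation record $\rec$. The only non-trivial step is the temporal one: when the assignment $\assign$ binds every agent, $\out(\assign,\fplay)$ is a singleton, so in \ESL the formula $\Aout \psi$ evaluates $\psi$ on exactly the unique outcome on which \SLi evaluates the corresponding temporal formula. Since $\tau$ is linear and truth-preserving, any algorithm for \ESL model checking would yield one for \SLi model checking, contradicting \cite{BMMRV17}. I do not anticipate any real obstacle; the argument amounts to a syntactic repackaging of \SLi inside \ESL, and the careful point is merely to match \ESL's two-level history/path syntax with the single-level syntax of \SLi.
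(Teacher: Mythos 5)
Your proposal is correct and matches the paper's own argument: the paper proves this theorem precisely by observing that \SLi embeds into \ESL by inserting outcome quantifiers before temporal operators, and then invoking the undecidability of \SLi model checking from \cite{BMMRV17}. Your additional remarks (the immateriality of the initial observations in the absence of knowledge operators, and the singleton outcome set under complete bindings) are exactly the details that make this embedding work.
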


\head{Hierarchical instances} 
We now isolate a sub-problem obtained by restricting
attention to \emph{hierarchical instances}.
Intuitively, an \ESL-instance $(\Phi,\CGSi)$ is hierarchical if, as one goes down a path in the syntactic tree of $\Phi$, 
the observations parameterising strategy quantifiers become finer. In
addition,  epistemic formulas must not talk about
 currently defined strategies.

 Given an \ESL sentence $\Phi$ and a syntactic subformula $\phi$ of
 $\Phi$, by parsing $\Phi$'s  syntactic tree one can define the set $\Agphi{\phi}$ of agents who
 are  bound to a strategy at the level of $\phi$, as well as where in $\Phi$
 these
 strategies are quantified upon.
\begin{definition}
  \label{def-unconstrained}
Let $\Phi$ be an \ESL sentence. A subformula $\Ka\phi$  is
  \emph{free} if for every 
subformula $\A\psi$ of $\phi$, the current strategy of each agent in
$\Agphi{\A\psi}$ is quantified within $\phi$.
\end{definition}

In other words, an epistemic subformula $\Ka\phi$ is free if it 
does not  talk about strategies that are quantified before it.

 \begin{example}
   \label{ex-free}
   If $\Phi=\Estrato{\obs}\bind{\var}\Ka\Aout\X p$,
   then $\Ka\Aout\X p$ is not free in $\Phi$, because at the level of
   $\Aout$ agent $a$ is bound to strategy $\var$ which is
   quantified ``outside'' of $\Ka\Aout\X p$.
But   if $\Phi=\Estrato{\obs}\bind{\var}\Ka\bind{\unbind}\Aout\X p$, then
   $\Ka\bind{\unbind}\Aout\X p$ is free in $\Phi$, because at the
   level of $\A$ no agent is bound to a strategy.
Also if
$\Phi=\Estrato{\obs}\bind{\var}\Ka\Estrato[\varb]{\obs'}\bind{\varb}\Aout\X
p$, then $\Ka\Estrato[\varb]{\obs'}\bind{\varb}\Aout\X
p$ is free in $\Phi$, because at the level of $\Aout$ the only agent
bound to a strategy is $a$, and her strategy is quantified upon after the
knowledge operator.
\end{example}

We can now define the
hierarchical fragment for which we establish decidability of the
model-checking problem.

\begin{definition}[Hierarchical instances]
  \label{def-hierarchical}
  An \ESL-instance $(\Phi,\CGSi)$ is \emph{hierarchical} if all
  epistemic subformulas of $\Phi$ are free in $\Phi$ and, for all subformulas 
  of the form $\phi_1=\Estrato{\obs_1}\phi'_1$ and
  $\phi_2=\Estrato{\obs_2}\phi'_2$ where $\phi_2$ is a subformula of
  $\phi'_1$, it holds that $\obsint(\obs_2)\subseteq\obsint(\obs_1)$.
\end{definition}


In other words, an instance is hierarchical if innermost strategy quantifiers
 observe at least as much as
outermost ones, and epistemic formulas do not talk about current strategies.
Here is the main contribution of this work:


\begin{theorem}
\label{theo-ESL}
The model-checking problem for \ESL restricted to the class of hierarchical instances is decidable.
\end{theorem}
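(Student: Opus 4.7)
The plan is to follow the reduction-to-\QCTLs paradigm already used for \SLi in~\cite{BMMRV17} and for \ATLs with strategy context in~\cite{DBLP:journals/iandc/LaroussinieM15}. Concretely, we introduce \EQCTLsi, an imperfect-information extension of \QCTLs enriched with epistemic operators $\K_a$ and the observation-change operator of~\cite{barriere:2018}, and we reduce model checking hierarchical \ESL instances to model checking the hierarchical fragment of \EQCTLsi. The overall workflow is therefore: (i) define \EQCTLsi and its hierarchical fragment, strictly extending that of \QCTLsi; (ii) prove decidability of hierarchical \EQCTLsi model checking via an automata-theoretic construction on trees, combining the narrowing/widening technique of~\cite{BMMRV17} with a treatment of epistemic operators and observation changes in the style of~\cite{barriere:2018}; (iii) compile every hierarchical \ESL-instance $(\Phi,\CGSi)$ into an \EQCTLsi-instance $(\Phi^\sharp,\CGSi^\sharp)$ and argue that hierarchicality is preserved.

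For step (iii), I would encode strategies by fresh atomic propositions in the unfolding of the \CGSi, exactly as in~\cite{BMMRV17}: a strategy quantifier $\Estrato{\obs}\var\,\phi$ is translated into a second-order quantifier over a family of propositions labelling nodes of the tree, whose uniformity is enforced by the $\obs$-observation of the quantifier. Bindings $\bind{\var}$ are compiled into syntactic substitutions selecting the action dictated by the labelling, and $\bind{\unbind}$ is handled by removing the binding from the current assignment. The outcome quantifier $\Aout$ becomes the universal path quantifier $\A$ of \QCTLs restricted (via the usual trick of labelling ``followed'' successors) to the transitions compatible with the current assignment. For knowledge operators $\K_a\phi$, I would use the observation-change operator of~\cite{barriere:2018} to reinterpret the current subformula under the observation record dynamically built from the observations of the strategies currently bound to $a$; the freeness assumption on $\K_a\phi$ guarantees that no such strategy proposition occurs inside $\phi$, so that the epistemic subformula can be evaluated independently of the current assignment and the observation record can be read off purely from the syntactic path from the root to $\K_a$.

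The main obstacle is step (ii), and in particular propagating decidability through epistemic operators in presence of dynamic observation change while keeping the hierarchy. For pure \QCTLsi the narrowing technique works because finer second-order quantifiers can be pushed through tree-automata projections that are themselves hierarchical; the novelty here is that $\K_a$ introduces a \emph{synchronous equal-length} quantification over indistinguishable histories whose relation depends on the entire observation record accumulated along the formula. I would handle this by showing that, under the freeness hypothesis, the subtree beneath a $\K_a$ can be evaluated on a tree quotiented by the observation currently active for $a$, using an information-set construction \`a la Reif, and that the resulting automaton is a hierarchical alternating tree automaton whose language non-emptiness remains decidable; the hierarchicality of the instance is exactly what ensures that narrowing can still be applied at outer strategy quantifiers after this quotient.

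Once hierarchical \EQCTLsi model checking is decided, correctness of the reduction follows by an induction on $\Phi$ matching each clause of Definition~\ref{def-ESL-semantics} with its compiled counterpart, with the crucial lemma being that $\fplay\eqha{\rec}\fplay'$ in $\CGSi$ coincides, for every finite play $\fplay$ reachable in the compiled tree, with the indistinguishability relation induced by the observation-change operator in $\CGSi^\sharp$. Combining this with the decidability of hierarchical \EQCTLsi yields Theorem~\ref{theo-ESL}.
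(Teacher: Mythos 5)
Your proposal follows essentially the same route as the paper: it introduces \EQCTLsi with knowledge and observation-change operators, decides its hierarchical fragment by extending the tree-automata construction of~\cite{BMMRV17} with an information-set ($k$-tree) treatment of epistemic operators as in~\cite{barriere:2018}, and compiles hierarchical \ESL instances into hierarchical \EQCTLsi instances by encoding strategies as uniformly quantified propositions, restricting the path quantifier to assignment-compatible transitions, and exploiting freeness of epistemic subformulas to keep the translation hierarchical. The only cosmetic difference is that the paper inserts the observation-change operator at the binding $\bind{\var}$ rather than recovering the record at the knowledge operator, but the effect is the same.
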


We  prove this result by reducing it to the model-checking 
problem for the hierarchical fragment of an extension of  \QCTLs with
imperfect information, knowledge and observation change, which we now introduce and
study in order to use it as an
intermediate, ``low-level'' logic between tree automata and \ESL.

\section{\QCTLsi with knowledge and observation change}
\label{sec-mQCTLi}

\QCTLs  extends \CTLs with second order
quantification on atomic propositions~\cite{emerson1984deciding,Kup95,KMTV00,french2001decidability,DBLP:journals/corr/LaroussinieM14}. It was recently extended to model imperfect-information aspects, resulting in the logic
called \QCTLsi~\cite{BMMRV17}.
In this section 
we first define an epistemic extension of \QCTLsi with operators for
knowledge and dynamic observation change, that we call \EQCTLsi.
Then we define
the syntactic class of \emph{hierarchical formulas} and prove that model checking this class of formulas is decidable.

\subsection{Models}
\label{sec-models-EQCTLi}

The models of \EQCTLsi, as those of \QCTLsi, are structures in which
states are tuples of local states. Fix $n\in\setn$.

\vspace{1ex}
\head{Local states}
Let $\{\setlstates_{i}\}_{i\in [n]}$ denote $n$ disjoint finite sets of
\emph{local states}. For  $I\subseteq [n]$, we let $\Dirtreei\egdef\prod_{i\in
I}\setlstates_{i}$ if $I\neq\emptyset$, and
$\Dirtreei[\emptyset]\egdef\{\blank\}$ where $\blank$ is a special symbol.

\vspace{1ex}
\head{Concrete observations}
A set $\cobs \subseteq [n]$ is a \emph{concrete observation} (to distinguish
 from observation symbols $\obs$ of \ESL).

Fix $\cobs \subseteq [n]$ and $I \subseteq [n]$.
 Two tuples $\dir,\dir'\in\Dirtreei[I]$ are
 \emph{$\cobs$-indistinguishable}, written  
  $\dir\oequiv\dir'$, if
 for each $i\in I\cap\,\cobs$, $\dir_i=\dir'_i$.
 Two words
   $\noeud=\noeud_{0}\ldots\noeud_{i}$ and
   $\noeud'=\noeud'_{0}\ldots\noeud'_{j}$ over alphabet $\Dirtreei[I]$ are
   \emph{$\cobs$-indistinguishable}, written $\noeud\oequivt\noeud'$, if
   $i=j$ and for all $k\in \{0,\ldots,i\}$ we have
   $\noeud_{k}\oequiv\noeud'_{k}$.

 \vspace{1ex}
\head{Compound Kripke structures} 
These are like Kripke structures
except that the states are elements of $\Dirtreei[{[n]}]$.
A \emph{compound Kripke structure}, or \CKS, over $\APf$, is a tuple 
$\CKS=(\setstates,\relation,\lab,\stateinit,\cobsvecinit)$ where
$\setstates\subseteq \Dirtreei[{[n]}]$ is a set of \emph{states}, 
$\relation\subseteq\setstates\times\setstates$ is a
left-total\footnote{\ie, for all $\state\in\setstates$, there exists $\state'$
such that $(\state,\state')\in\relation$.} \emph{transition relation}, 
$\lab:\setstates\to 2^{\APf}$ is a \emph{\labeling function},
$\stateinit \in \setstates$ is an \emph{initial state}, and
$\cobsvecinit=(\cobsinit_\ag)_{\ag\in\Agf}$ is an \emph{initial
  concrete observation} for each agent.

A \emph{path} in $\CKS$  is an infinite sequence of states
$\spath=\state_{0}\state_{1}\ldots$ such that
 for all $i\in\setn$,
$(\state_{i},\state_{i+1})\in \relation$, and a \emph{finite path}
$\fspath=\state_{0}\state_{1}\ldots\state_n$ is a finite prefix of a path. For 
$\state\in\setstates$, we let $\Paths(\state)$ be the set of all
paths that start in $\state$, and $\FPaths(\state)$ is the set of
finite paths that start in $\state$.  

\subsection{Syntax of \EQCTLsi}
\label{sec-syntax-EQCTLi}

The syntax of \EQCTLsi extends that of \QCTLsi with epistemic
operators $\Ka$ and observation-change operators $\Da{\cobs}$,
which were recently introduced and studied in~\cite{barriere:2018} in an epistemic temporal
logic without second-order quantification.

\begin{definition}[\EQCTLsi Syntax]
  \label{def-syntax-EQCTLsi}
  The syntax of \EQCTLsi is defined by the following grammar:
  \begin{align*}
  \phi\egdef &\; p \mid \neg \phi \mid \phi\ou \phi \mid \A \psi \mid
  \existsp[p]{\cobs} \phi \mid \Ka \phi \mid \Da{\cobs}\phi\\
    \psi\egdef &\; \phi \mid \neg \psi \mid \psi\ou \psi \mid \X \psi \mid
  \psi \until \psi
\end{align*}
where $p\in\APf$, $a\in\Agf$ 
and $\cobs\subseteq [n]$.
\end{definition}

Formulas of type $\phi$ are called \emph{state formulas}, those of type $\psi$
are called \emph{path formulas}, and \EQCTLsi consists of all the state formulas.
$\A$ is the classic path quantifier from branching-time temporal
logics.
$\exists^\cobs$ is the second-order quantifier with imperfect
information from \QCTLsi~\cite{BMMRV17}. $\existsp[p]{\cobs}\phi$
holds in a tree if there is way to choose a labelling for $p$ such that $\phi$
holds, with the constraint that $\cobs$-equivalent nodes of the tree
must be labelled identically.
$\Ka\phi$  means ``agent $\ag$ knows that $\phi$ holds'',
where the knowledge depends on the sequence of observations agent
$\ag$ has had; finally, $\Da{\cobs}\phi$ means that after agent $\ag$ switches to
observation $\cobs$, $\phi$ holds.

Given an \EQCTLsi formula $\phi$, we define the set of
\emph{quantified propositions} $\APq(\phi)\subseteq\APf$ as the set of
atomic propositions $p$ such that $\phi$ has a subformula of the form
$\existsp[p]{\cobs}\phi$. We also define the set of \emph{free
  propositions} $\APfree(\phi)\subseteq\APf$ as the set of atomic
propositions $p$ that appear out of the scope of any quantifier of the
form $\existsp[p]{\cobs}$

\subsection{Semantics of \EQCTLsi}
\label{sec-EQCTLsi-semantics}

Before defining the semantics of the logic we first recall some
definitions for trees.

\vspace{1ex}
\head{Trees}
Let $\Dirtree$ be a finite set (typically a set of states). 
An \emph{$\Dirtree$-tree} $\tree$ 
 is a
nonempty set of words $\tree\subseteq \Dirtree^+$ such that:
\begin{itemize}
  \item\label{p-root} there exists $\racine\in\Dirtree$,  called the
    \emph{root} of $\tree$, such that each
    $\noeud\in\tree$ starts with $\racine$ (\ie, $\racine\pref\noeud$);
  \item if $\noeud\cdot\dir\in\tree$ and $\noeud\neq\epsilon$, then
    $\noeud\in\tree$, and
  \item if $\noeud\in\tree$ then there exists $\dir\in\Dirtree$ such that $\noeud\cdot\dir\in\tree$.
\end{itemize}

The elements of a tree $\tree$ are called \emph{nodes}.
  If 
 $\noeud\cdot\dir \in \tree$, we say that $\noeud\cdot\dir$ is a \emph{child} of
 $\noeud$.
A \emph{\tpath} in $\tree$ is an infinite sequence of nodes $\tpath=\noeud_0\noeud_1\ldots$
such that for all $i\in\setn$, $\noeud_{i+1}$ is a child of
$\noeud_i$,
and $\tPaths(\noeud)$ is the set of \tpaths
 that start in node $\noeud$.  
An \emph{$\APf$-\labeled $\Dirtree$-tree}, or
\emph{$(\APf,\Dirtree)$-tree} for short, is a pair
$\ltree=(\tree,\lab)$, where $\tree$ is an $\Dirtree$-tree called the
\emph{domain} of $\ltree$ and
$\lab:\tree \rightarrow 2^{\APf}$ is a \emph{\labeling}.
For a labelled tree $\ltree=(\tree,\lab)$ and an atomic
proposition $p\in\APf$, we define the \emph{$p$-projection of $\ltree$}
as the labelled tree
$\proj{\ltree}\;\egdef(\tree,\proj{\lab})$, where for each
$\noeud\in\tree$, $\proj{\lab}\!(\noeud)\egdef \lab(\noeud)\setminus
\{p\}$. 
Two labelled trees $\ltree=(\tree,\lab)$ and
  $\ltree'=(\tree',\lab')$ are \emph{equivalent modulo $p$},
  written $\ltree\Pequiv\ltree'$, if
  $\proj{\ltree}=\proj{\ltree'}$ (in particular,
  $\tree=\tree'$).

\vspace{1ex}
\head{Quantification and uniformity} In \EQCTLsi, as in \QCTLsi,
$\existsp[p]{\cobs} \phi$ holds in a tree $\ltree$ if there is some
$\cobs$-uniform $p$-labelling of $\ltree$ such that $\ltree$ with this
$p$-labelling satisfies $\phi$.  A $p$-labelling of a
tree is $\cobs$-uniform  if 
 every two nodes that are indistinguishable for observation $\cobs$
agree on
their $p$-labelling.


\begin{definition}[$\cobs$-uniformity]
  \label{def-uniformity}
 A labelled tree $\ltree=(\tree,\lab)$ is \emph{$\cobs$-uniform in $p$} if for every pair of nodes
 $\noeud,\noeud'\in\tree$ such that $\noeud\oequivt\noeud'$, we have
 $p\in\lab(\noeud)$ iff $p\in\lab(\noeud')$.  
\end{definition}

\head{Changing observations} To capture how the observation-change operator affects
the semantics of the knowledge operator, we use again observation
records $\rec$ and the associated notion of observation sequence
$\obslista(\rec,n)$. They are defined as for \ESL
except that observation symbols $\obs$ are replaced with concrete
observations $\cobs$.
  For $\noeud=\noeud_{0}\ldots\noeud_{i}$ and
  $\noeud'=\noeud'_{0}\ldots\noeud'_{j}$ over alphabet $\Dirtreei[I]$,
  and 
an observation record  $\rec$, we say that $\noeud$ and $\noeud'$ are observationally
equivalent to agent $a$, written $\noeud \eqha{\rec} \noeud'$, if $i=j$
and,  for every $k\in\{0,\ldots,
i\}$ and every $\cobs\in\obslista(\rec,k)$, $\noeud_{k}\oequiv[\cobs]\noeud'_{k}$.

Finally, we inductively define the satisfaction relation $\models$. Let   $\ltree=(\tree,\lab)$ be
a $2^{\APf}$-\labeled $\Dirtreei$-tree, 
$\noeud$  a node and $\rec$ an observation record that stops at $\noeud$:
\[
  \begin{array}{lcl}
  \ltree,\rec,\noeud\models p 			& \mbox{ if } & p\in\lab(\noeud)\\
  \ltree,\rec,\noeud\models\neg \phi		& \mbox{ if } & \ltree,\rec,\noeud\not\models \phi\\
  \ltree,\rec,\noeud\models \phi \ou \phi'		& \mbox{ if } &
                                                                \ltree,\rec,\noeud \models \phi \mbox{ or }\ltree,\rec,\noeud\models \phi' \\
  \ltree,\rec,\noeud\models \E\psi	& \mbox{ if } & \exists\,\tpath\in\tPaths(\noeud) \mbox{ s.t. }\ltree,\rec,\tpath\models \psi \\
  \ltree,\rec,\noeud\models \existsp{\cobs} \phi & \mbox{ if }
  & \exists\,\ltree'\Pequiv[p]\ltree \mbox{ s.t.
  }\ltree'\mbox{ is $\cobs$-uniform in $p$}\\
&    & \mbox{and }\ltree',\rec,\noeud\models\phi\\
    \ltree,\rec,\noeud\models \Ka \phi & \mbox{ if }
  & \forall \noeud'\in\ltree \mbox{ s.t. }\noeud\eqha{\rec}\noeud',\\
& & \ltree,\rec,\noeud'\models\phi
  \end{array}\]
And if $\tpath$ is a path in $\tree$ and $\rec$  stops at $\tpath_0$:
\[\begin{array}{lcl}
\ltree,\rec,\tpath\models \phi& \mbox{ if } & \ltree,\rec,\tpath_{0}\models\phi \\ 
\ltree,\rec,\tpath\models \neg \psi & \mbox{ if } & \ltree,\rec,\tpath\not\models \psi \\ 
\ltree,\rec,\tpath\models \psi \ou \psi' & \mbox{ if } & \ltree,\rec,\tpath \models \psi \mbox{ or }\ltree,\rec,\tpath\models \psi' \\ 
\ltree,\rec,\tpath\models \X\psi & \mbox{ if } & \ltree,\rec,\tpath_{\geq 1}\models \psi \\ 
\ltree,\rec,\tpath\models \psi\until\psi' & \mbox{ if } & \exists\, i\geq 0 \mbox{ s.t.    }\ltree,\rec,\tpath_{\geq
                                                          i}\models\psi' \text{ and } \\
    & & \forall j \text{ s.t. }0\leq j <i, \;\ltree,\rec,\tpath_{\geq j}\models\psi
  \end{array}
  \]

We let $\ltree,\rec\models\phi$ denote $\ltree,\rec,\racine\models\phi$,
where $\racine$ is  $\ltree$'s root.


\vspace{1ex}
\head{Tree unfoldings $\unfold{\state}$}
Let $\CKS=(\setstates,\relation,\lab,\stateinit,\cobsvecinit)$ be a compound Kripke structure over $\APf$. 
The \emph{tree-unfolding of $\CKS$} is the $(\APf,\setstates)$-tree $\unfold{\state}\egdef (\tree,\lab')$, where
    $\tree$ is the set
    of all finite  paths that start in $\stateinit$, and
    for every $\noeud\in\tree$,
    $\lab'(\noeud)\egdef \lab(\last(\noeud))$.
    Given a \CKS $\KS$ and an \EQCTLsi formula $\phi$, we write
    $\KS \models \phi$ if $\unfold[\KS]{\state_\init}, \eobsrecvec\models \phi$,
    where $\eobsrecvec=(0,\cobsvecinit_\ag)_{\ag\in\Agf}$.

\vspace{1ex}
\head{Model-checking problem for \EQCTLsi}
The \emph{model-checking problem for \EQCTLsi} is the 
following: given an instance $(\CKS,\phi)$ where  $\CKS$ is a \CKS and   $\phi$ is an \EQCTLsi
formula, return `Yes' if $\CKS \models \phi$ and `No' otherwise.


Clearly, \EQCTLsi  subsumes
\QCTLsi. Since the latter has an undecidable
model-checking problem~\cite{BMMRV17}, the following is immediate:

\begin{theorem}
    \label{theo-undecidable}
Model checking  \EQCTLsi is undecidable.
\end{theorem}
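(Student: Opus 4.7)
The plan is to establish the result by a direct syntactic and semantic embedding of \QCTLsi into \EQCTLsi, then invoke the undecidability of model checking \QCTLsi from~\cite{BMMRV17}.

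First I would observe that, by inspection of Definitions~\ref{def-syntax-EQCTLsi}, every \QCTLsi formula is already a well-formed \EQCTLsi formula: the grammar of \EQCTLsi extends that of \QCTLsi only by adding the constructors $\Ka\phi$ and $\Da{\cobs}\phi$. So the embedding $\iota\colon\QCTLsi\hookrightarrow\EQCTLsi$ is the identity.

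Second I would check that the semantics agree on the image of $\iota$. Looking at the clauses of $\models$ in Section~\ref{sec-EQCTLsi-semantics}, the observation record $\rec$ appears only in the clauses for $\Ka$ (and, had we listed it here, $\Da{\cobs}$); all other clauses (atomic propositions, Booleans, $\E$, $\X$, $\until$, $\existsp{\cobs}$) are defined exactly as in \QCTLsi and are completely insensitive to $\rec$. Hence, by a trivial induction on formula structure, for every \QCTLsi formula $\phi$, every labelled tree $\ltree$, every node $\noeud$, and every observation record $\rec$ that stops at $\noeud$, we have $\ltree,\rec,\noeud\models\phi$ in \EQCTLsi iff $\ltree,\noeud\models\phi$ in \QCTLsi. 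Applying this at $\noeud=\racine$ with the initial record $\eobsrecvec=(0,\cobsvecinit_\ag)_{\ag\in\Agf}$ yields, for every \CKS $\CKS$ and every \QCTLsi sentence $\phi$, that $\CKS\models\phi$ in \EQCTLsi iff $\CKS\models\phi$ in \QCTLsi.

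Finally I would conclude: the identity map on instances $(\CKS,\phi)$ is therefore a (polynomial-time) reduction from the model-checking problem of \QCTLsi to that of \EQCTLsi. Since model checking \QCTLsi is undecidable~\cite{BMMRV17}, model checking \EQCTLsi is undecidable as well. There is no genuine obstacle here; the only step requiring any care is verifying that the presence of the record parameter $\rec$ in the \EQCTLsi semantics does not perturb the interpretation of the purely \QCTLsi fragment, which is immediate from the form of the semantic clauses.
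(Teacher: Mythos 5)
Your proposal is correct and follows exactly the paper's own argument: \EQCTLsi syntactically and semantically subsumes \QCTLsi (the observation record only affects the new operators), so undecidability is inherited from the \QCTLsi model-checking problem of~\cite{BMMRV17}. The paper states this in one line; your version merely spells out the routine induction confirming that the record parameter is inert on the \QCTLsi fragment.
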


We now present the syntactic fragment for which we prove that model
checking is decidable. First we adapt the notion of free epistemic formula to the context of \EQCTLsi.
Intuitively, an epistemic subformula $\phi$ of a formula $\Phi$ is free if it does not contain a
free occurrence of a proposition quantified in $\Phi$. To see the
connection with the corresponding notion for \ESL, consider that quantification
on propositions will be used to capture quantification on strategies. 

\begin{definition}
  \label{def-free-epistemice-eqctlsi}
  Let $\Phi\in \EQCTLsi$, and recall that we assume
  $\APq(\Phi)\inter\APfree(\Phi)=\emptyset$. An epistemic subformula $\phi=\Ka\phi'$ of $\Phi$
  is \emph{free in $\Phi$} if $\APq(\Phi)\inter\APfree(\phi)=\emptyset$.
\end{definition}

\bmchanged{For instance, if $\Phi=\existsp{\cobs}(\Ka p) \et \Ka q$, then
subformula $\Ka q$ is free in $\Phi$, but subformula $\Ka p$ is not
because $p$ is quantified in $\Phi$ and  appears free
in $\Ka p$.}



\begin{definition}[Hierarchical formulas]
  \label{def-hierarchical-bis}
  An \EQCTLsi formula $\Phi$ is \emph{hierarchical} if all its epistemic
  subformulas are free in $\Phi$, and for all
  subformulas $\phi_{1},\phi_{2}$ of the form
  $\phi_{1}=\existsp[p_{1}]{\cobs_{1}}\phi'_{1}$ and
  $\phi_{2}=\existsp[p_{2}]{\cobs_{2}}\phi'_{2}$ where
  $\phi_{2}$
  is a subformula of $\phi'_{1}$, we have $\cobs_{1}\subseteq\cobs_{2}$.
\end{definition}

In other words, a formula is hierarchical if epistemic subformulas are
free, and innermost propositional quantifiers
 observe at least as much as
outermost ones. Note that this is very close to
hierarchical formulas of \ESL.
We let \EQCTLsih be the set of hierarchical \EQCTLsi formulas.

\begin{theorem}
  \label{theo-decidable-EQCTLi}
The model-checking problem for \EQCTLsih  is non-elementary decidable.
\end{theorem}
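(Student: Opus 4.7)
The plan is to extend the automata-theoretic approach used in~\cite{BMMRV17} for hierarchical \QCTLsi to handle the two new operators $\Ka$ and $\Da{\cobs}$. By induction on the structure of $\Phi$, I would construct an alternating parity tree automaton $\mathcal{A}_\Phi$ over $(\APf,\setstates)$-trees, parameterised by a current observation record, such that $\mathcal{A}_\Phi$ accepts $\unfold{\stateinit}$ if and only if $\unfold{\stateinit},\eobsrecvec\models\Phi$. Decidability of the model-checking problem then reduces to the classical decidability of emptiness for alternating parity tree automata.

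The Boolean cases, $\A$, $\X$ and $\until$ are dispatched by the standard \CTLs-to-alternating-tree-automaton translation. The operator $\Da{\cobs}\phi$ simply updates the current observation record before launching $\mathcal{A}_\phi$; since an agent's current observation is a single symbol and only finitely many observation changes can be introduced per syntactic position, the relevant bookkeeping fits into the automaton's finite control. For $\existsp{\cobs}\phi$, I would reuse the narrowing technique of~\cite{BMMRV17}: narrow the tree onto the coordinates in $\cobs$, nondeterministically guess a $\cobs$-uniform $p$-labelling on the narrowed tree, and simulate $\mathcal{A}_\phi$ on the relabelled narrowed tree. The hierarchy $\cobs_1\subseteq\cobs_2$ across nested quantifiers is exactly what lets successive narrowings compose, guaranteeing that the subautomata built for inner quantifiers remain meaningful once an outer narrowing has been performed.

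The genuinely new case is $\Ka\phi$. The freeness hypothesis on $\Phi$ forces $\APq(\Phi)\cap\APfree(\Ka\phi)=\emptyset$, so the truth value of $\Ka\phi$ at any node does not depend on labellings existentially guessed by quantifiers strictly above $\Ka\phi$ in the syntactic tree. I can therefore build $\mathcal{A}_{\Ka\phi}$ once from $\mathcal{A}_\phi$ and reify its acceptance status as a fresh atomic proposition before returning control to the outer construction. The automaton $\mathcal{A}_{\Ka\phi}$ universally branches, at a node $u$, to all $u'$ with $u\eqha{\rec}u'$, and runs $\mathcal{A}_\phi$ at each such $u'$. The indistinguishability test is realised by a two-track subautomaton that walks a pair of root-to-node paths synchronously and checks, at each depth $k$, that the two local-state tuples agree on the coordinates dictated by every $\cobs\in\obslista(\rec,k)$; this is the tree-automata counterpart of the standard encoding of synchronous perfect-recall knowledge.

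The principal obstacle is the interaction between $\Ka$ and the propositional quantifiers. Without freeness, an outer $\existsp[p]{\cobs'}$ could guess a $p$-labelling that differs on nodes which are indistinguishable for an inner $\Ka$, so that $\Ka\phi$ would have to ``see'' the outer guess; the narrowing construction would cease to compose, and indeed \EQCTLsi is undecidable in full generality by Theorem~\ref{theo-undecidable}. The hierarchical freeness condition rules out exactly this pattern, so each $\Ka$-subformula can be evaluated once and reified before the outer quantifiers are processed. Composing the constructions bottom-up yields an alternating parity tree automaton whose size is non-elementary in $|\Phi|$, with one exponential blow-up per alternation level between propositional quantification and narrowing-followed-by-determinisation; its emptiness is decidable, yielding the non-elementary upper bound claimed.
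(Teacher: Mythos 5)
Your overall architecture --- bottom-up automata construction, narrowing for $\existsp[p]{\cobs}$, finite-control bookkeeping of observation records for $\Da{\cobs}$, and reification of free epistemic subformulas as fresh atomic propositions --- matches the paper's intended proof; in particular the reification step is exactly the paper's observation that freeness lets each $\Ka$-subformula be evaluated independently of the outer propositional quantifications. The gap is in how you actually evaluate $\Ka\phi$. An alternating parity tree automaton at a node $u$ can only send copies to children of $u$; it cannot ``universally branch to all $u'$ with $u\eqha{\rec}u'$'', because those nodes lie in other subtrees at the same depth, and it cannot run a ``two-track subautomaton that walks a pair of root-to-node paths synchronously'' --- that is an automaton over \emph{pairs} of branches of a single input tree, which is not a tree automaton at all (unrestricted access to such cross-branch comparisons is precisely the mechanism behind the undecidability of the unrestricted logic in Theorem~\ref{theo-undecidable}). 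So the one genuinely new case of your induction has no valid construction behind it, and reifying the truth value of $\Ka\phi$ presupposes that you can compute it in the first place.

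The paper closes this gap with the $k$-trees machinery of van der Meyden and Shilov, extended in \cite{barriere:2018} to dynamic observation change: for synchronous perfect recall one enriches each node of the tree unfolding with the information set of histories the agent considers possible, maintained by a subset-construction-like deterministic transducer along each branch (nested $k$ times for $k$ nested knowledge operators, with the observation record determining which coordinates the subset construction tracks at each step). This localizes the quantification over $\eqha{\rec}$-equivalent nodes into a test on the enriched labels, so the result is again an ordinary alternating parity tree automaton whose emptiness is decidable. Your proof needs this (or an equivalent localization of the indistinguishability quantification) before the induction goes through; the remaining cases and the non-elementary complexity accounting are fine.
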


\begin{proof}[Proof sketch]
  We build upon the tree automata construction for \QCTLsi
  presented in~\cite{BMMRV17}, which we extend  to take into account knowledge
  operators and observation change. To do so we resort to the \ktrees
  machinery developed in~\cite{DBLP:journals/iandc/Meyden98,DBLP:conf/fsttcs/MeydenS99}, and
  extended in~\cite{barriere:2018} to the case of dynamic observation
  change. One also needs to observe that free epistemic subformulas can be
  evaluated indifferently in any node of the input tree. 
\end{proof}

\section{Model-checking hierarchical  {\ESL}}
\label{sec-modelcheckingESL}

In this section we prove that  model checking hierarchical instances of \ESL
is decidable
(Theorem~\ref{theo-ESL}), by reduction to the model-checking problem
for \QCTLsih.

Let $(\Phi,\CGSi)$ be a hierarchical instance of the \ESL
model-checking problem. 
The construction of the \CKS is the same as in~\cite{BMMRV17}, except
that in addition we have to deal with initial observations.

\vspace{1ex}
\head{Constructing the \CKS $\CKS_{\CGSi}$} 
 Let $\CGSi=(\Act,\setpos,\trans,\val,\obsint,\posinit,\obsinit)$ and
 $\Obsf=\{\obs_{1},\ldots,\obs_{n}\}$.  For $i \in [n]$, define the
 local states
 $\setlstates_{i}\egdef\{\eqc{\obs_{i}}\mid\pos\in\setpos\}$ where
 $\eqc{\obs}$ is the equivalence class of $\pos$ for relation
 $\obseq$. We also let $\setlstates_{n+1}\egdef\setpos$. Finally, let
 $\APv\egdef\{p_{\pos}\mid\pos\in\setpos\}$ be a set of fresh atomic
 propositions, disjoint from $\APf$.

 Define the \CKS $\CKS_{\CGSi}\egdef(\setstates,\relation,\lab',\stateinit,\cobsvecinit)$ where
\begin{itemize}
\item $\setstates\egdef\{\state_{\pos} \mid \pos\in\setpos\}$,
  
where $\state_{\pos}\egdef(\eqc{\obs_{1}},\ldots,\eqc{\obs_{n}},\pos) \in \prod_{i\in [n+1]}\setlstates_{i}$.
\item $\relation\egdef\{(\state_{\pos},\state_{\pos'})\mid
  \exists\jmov\in\Mov^{\Agf} \mbox{ s.t. }\trans(\pos,\jmov)=\pos'\}
  \subseteq \setstates^2$,
\item $\lab'(\state_{\pos})\egdef\val(\pos)\union \{p_{\pos}\}
  \subseteq \APf \cup \APv$,
\item $\state_{\init}\egdef\state_{\pos_{\init}}$,
\item $\cobsvecinit$ is such that $\cobsvecinit_\ag=\{i\}$ if $\obsinit_\ag=\obs_i$.
\end{itemize}


For every $\fplay=\pos_{0}\ldots\pos_{k}$, 
we let $\noeud_{\fplay}\egdef \state_{\pos_{0}}\ldots
\state_{\pos_{k}}$.  The mapping
$\fplay\mapsto\noeud_{\fplay}$ is a bijection between 
 finite plays in $\CGSi$ and 
nodes in $\unfold[\CKS_{\CGSi}]{\state_{\init}}$.
For
 $i\in[n]$ we let $\cobs_i=\{i\}$, and 
for an observation record $\rec$ in $\CGSi$ 
we let $\rec'$ be the observation record in $\CKS_\CGSi$ where each
 $\obs_i$ is replaced with $\cobs_i$.

\vspace{1ex}
\head{Constructing the \EQCTLsih formulas $\tr[f]{\phi}$} 
Suppose that
$\Mov=\{\mov_{1},\ldots,\mov_{\maxmov}\}$; let  $\APm\egdef\{p_{\mov}^{\var}\mid\mov\in\setmoves
\mbox{ and }\var \in \Varf\}$ be a set of propositions disjoint from $\APf\union\APv$.
For every partial
 function $f:\Agf \partialto \Varf$
we define $\tr[f]{\phi}$
by induction on $\phi$. All cases for boolean, temporal and knowledge
operators are obtained by simply distributing
over the operators of the logic; for instance, $\tr[f]{p}=p$ and $\tr[f]{\Ka\phi}=
\Ka\tr[f]{\phi}$. We now describe the translation for the remaining cases.
\begin{align*}
\tr[f]{\Estrato{\obs}\phi}	& \egdef  \exists^{\trobs{\obs}}
                                  p_{\mov_{1}}^{\var}\ldots
                                  \exists^{\trobs{\obs}}
                                  p_{\mov_{\maxmov}}^{\var}.\, \phistrat
                                  \et \tr[f]{\phi} 
\end{align*}
where $\trobs{\obs_i} = \{j\mid \obsint(\obs_{i})\subseteq\obsint(\obs_{j})\}$, and
\[
\phistrat = \A\G\bigou_{\mov\in\Mov}(p_{\mov}^{\var}\et\biget_{\mov'\neq\mov}\neg p_{\mov'}^{\var}).
\]


Note that  $\cobs_i=\{i\}\subseteq\trobs{\obs_i}$. The definition of $\trobs{\obs_i}$ is tailored to obtain a
hierarchical \EQCTLsi formula. It is correct because for each additional
component in $\trobs{\obs_i}$ (\ie, each $j\neq i$), we have
$\obsint(\obs_i)\subseteq \obsint(\obs_j)$, meaning that each such
component $j$ brings less information than component $i$. A strategy
thus has no more information with $\trobs{\obs_i}$ than it would with $\cobs_i$.




For the binding operator, agent $a$'s observation
becomes the one associated with  strategy variable $x$ (see Remark~\ref{rem-obs-strat}):
\begin{align*}
  \tr[f]{\bind{\var}\phi} & \egdef \Da{\cobs_i}\tr[{f[\ag\mapsto
                            \var]}]{\phi}  \quad  \mbox{if } \obs_\var=\obs_i.
\end{align*}

For the outcome quantifier, we let
\begin{multline*}
  \tr[f]{\Aout\psi} \egdef \A(\psiout[f] \rightarrow \tr[f]{\psi}),\quad\mbox{where}\\
  \psiout[f]= \G
  \bigwedge_{\pos\in\setpos} \Big ( p_{\pos} \rightarrow \bigvee_{\jmov\in\Act^{\Agf}} 
 ( \bigwedge_{\ag\in\dom(f)}p_{\jmov_{\ag}}^{f(\ag)} \wedge \X
p_{\trans(\pos,\jmov)}  )  \Big )
\end{multline*}

The formula $\psiout[f]$ selects paths in which agents who are
assigned to a strategy follow it.

 \begin{proposition}
   \label{prop-redux}
   Suppose that $\free(\phi)\inter\Agf\subseteq\dom(f)$, that for all
   $a \in \dom(f)$, $f(a) =
   x$ iff $\assign(a) = \assign(x)$, and that $\rec$ stops at $\fplay$. Then
\[\CGSi,\assign,\rec,{\fplay}\models\phi\quad \mbox{ if and only if }\quad
\unfold[\CKS_{\CGSi}]{},\rec',\noeud_{\fplay} \models \tr[f]{\phi}.\]
 \end{proposition}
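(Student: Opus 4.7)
The proof proceeds by induction on the structure of $\phi$. The cases of atomic propositions and Boolean connectives are immediate from the definition of $\lab'$ and the inductive hypothesis. The temporal and knowledge cases are handled by simple distribution over the operators, using two observations: first, the map $\fplay\mapsto\noeud_\fplay$ is a bijection between finite plays in $\CGSi$ and nodes of $\unfold[\CKS_{\CGSi}]{}$ that preserves transitions and labelling; second, for any agent $a$ and any observation record $\rec$ stopping at $\fplay$, one has $\fplay\eqha{\rec}\fplay'$ in $\CGSi$ if and only if $\noeud_\fplay\eqha{\rec'}\noeud_{\fplay'}$ in $\unfold[\CKS_{\CGSi}]{}$. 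The latter follows because, by construction of $\CKS_\CGSi$, two states $\state_\pos$ and $\state_{\pos'}$ agree on their $i$-th component exactly when $\pos\obseq[\obs_i]\pos'$, and the observation sequences $\obslista(\rec,k)$ and $\obslista(\rec',k)$ are obtained from each other by the replacement $\obs_i\leftrightarrow\cobs_i$.

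The case of $\Aout\psi$ uses the formula $\psiout[f]$ as a path filter. By induction on the path $\iplay$ selected, one checks that a path $\tpath$ in $\unfold[\CKS_\CGSi]{}$ satisfies $\psiout[f]$ if and only if the corresponding play belongs to $\out(\assign,\fplay)$: at each node $\noeud_\fplay$ labelled by $p_\pos$, the child corresponds to some joint action $\jmov$, and the conjunct $\bigwedge_{a\in\dom(f)}p^{f(a)}_{\jmov_a}$ forces $\jmov_a$ to agree with the action chosen by $\assign(a)=\assign(f(a))$, as encoded by the $p^{f(a)}_\mov$-labelling. The binding case $\bind{x}\phi$ is handled by the observation-change operator: applying $\Da{\cobs_i}$ at node $\noeud_\fplay$ yields the record $\rec'[a\mapsto \rec'_a\cdot(|\fplay|-1,\cobs_i)]$ on the \EQCTLsi side, which by the above correspondence mirrors the update on the \ESL side where $a$ inherits the observation $\obs_i=\obs_x$ of her new strategy. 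The update $f[a\mapsto x]$ in the inductive call matches the hypothesis that $f(a)=x$ iff $\assign(a)=\assign(x)$.

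The main case is the strategy quantifier $\Estrato{\obs_i}\phi$. We must establish a bijective correspondence between $\obs_i$-strategies $\strat\in\setstrato[\obs_i]$ in $\CGSi$ and $\trobs{\obs_i}$-uniform labellings of the propositions $p^x_{\mov_1},\ldots,p^x_{\mov_\maxmov}$ on $\unfold[\CKS_\CGSi]{}$ that satisfy $\phistrat$. Given such a labelling, the conjunct $\phistrat$ guarantees that at each node exactly one $p^x_\mov$ holds, thereby encoding a function $\strat:\setplays\to\Mov$; conversely, any strategy yields such a labelling. It remains to show that $\trobs{\obs_i}$-uniformity of the labelling is equivalent to $\obs_i$-uniformity of the encoded strategy. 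The forward direction follows because $\cobs_i=\{i\}\subseteq\trobs{\obs_i}$, so a $\trobs{\obs_i}$-uniform labelling is in particular $\cobs_i$-uniform, and $\cobs_i$-equivalence of nodes coincides with $\obs_i$-equivalence of the associated plays by construction of the local states. For the reverse direction, any additional component $j\in\trobs{\obs_i}\setminus\{i\}$ satisfies $\obsint(\obs_i)\subseteq\obsint(\obs_j)$, so $\obs_i$-equivalence is finer than $\obs_j$-equivalence and hence an $\obs_i$-uniform strategy induces a $\trobs{\obs_i}$-uniform labelling.

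The subtlest point, and the one requiring the most care, is precisely this strategy-quantifier case: one must show that enlarging $\cobs_i$ to $\trobs{\obs_i}$ does not grant or restrict the strategy any power, which is exactly what the inclusion condition on $\trobs{\obs_i}$ is designed to ensure. Once this is established, the remaining inductive composition is routine, and the proposition follows for all $\phi$.
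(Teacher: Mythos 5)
Your sketch is correct and takes essentially the same route the paper intends: a structural induction built on the bijection $\fplay\mapsto\noeud_{\fplay}$, the encoding of strategies as labellings constrained by $\phistrat$ and read back by $\psiout[f]$, and the key observation that widening $\cobs_i=\{i\}$ to $\trobs{\obs_i}$ leaves the uniformity constraint unchanged since every added component $j$ satisfies $\obsint(\obs_i)\subseteq\obsint(\obs_j)$. One remark: you (rightly) treat the \ESL binding clause as appending $(|\fplay|-1,\obs_\var)$ to agent $a$'s observation record, which the displayed semantic clause for $\bind{\var}\varphi$ omits but the surrounding text on observation records clearly intends; with that reading your $\Da{\cobs_i}$ correspondence matches the construction exactly.
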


 Applying this to sentence $\Phi$, any assignment $\assign$, 
 $f=\emptyset$,  $\fplay=\pos_\init$ and initial
 observation records, we get that
 \[\CGSi \models \Phi \quad \mbox{if and only if}\quad
\unfold[\CKS_{\CGSi}]{} \models
 \tr[\emptyset]{\Phi}.\]


\vspace{1ex}
\head{Preserving hierarchy}
To complete the proof of Theorem~\ref{theo-ESL} we show that $\tr[\emptyset]{\Phi}$ is a
hierarchical \EQCTLsi formula. 


First, observe that if $\Ka\phi$ is a free epistemic formula in $\Phi$, then
its translation is also a  free epistemic formula in
$\tr[\emptyset]{\Phi}$. Indeed, the only atomic propositions that are
quantified in $\tr[\emptyset]{\Phi}$ are of the form
$p_{\mov}^{\var}$. They code for strategies, and
appear only in translations of strategy quantifiers, where
they are quantified upon, and outcome
quantifiers. Thus they can only appear free
in the translation of  an epistemic formula $\Ka\phi$ if $\phi$
contains an outcome quantifier where some agent uses a strategy that
is not quantified within $\phi$.
Concerning the hierarchy on observations of quantifiers, simply observe that $\Phi$ is hierarchical in
$\CGSi$, and  for every two observations $\obs_{i}$ and $\obs_{j}$ in $\Obsf$ such that
$\obsint(\obs_{i})\subseteq\obsint(\obs_{j})$, by definition of $\trobs{\obs_{k}}$
we have that $\trobs{\obs_{i}}\subseteq \trobs{\obs_{j}}$.

\section{Applications}
\label{sec-applications}

\ESL being very expressive,  many strategic problems with epistemic
temporal specifications can be cast as model-checking problems for
\ESL. Our main result thus provides a decision procedure for such
problems on systems with hierarchical information. We
present two such applications.

\subsection{Distributed synthesis}
\label{sec-dist-synth}

We consider the problem of distributed synthesis from epistemic
temporal specifications studied
in~\cite{van1998synthesis,DBLP:conf/concur/MeydenW05} and we give a  precise definition to its
variant with uninformed semantics of knowledge, discussed
in~\cite{DBLP:conf/mfcs/Puchala10}.

Assume that
$\Agf=\{\ag_1,\ldots,\ag_n,e\}$, where $e$ is a special player called
the \emph{environment}. Assume also that to each player $\ag_i$ is assigned an
observation symbol $\obs_i$. 
The above-mentioned works
 consider
specifications from linear-time epistemic temporal logic \LTLK, which
extends \LTL with knowledge operators. The semantics of
 knowledge operators contains an implicit universal quantification on
 continuations of indistinguishable finite
 plays. In~\cite{van1998synthesis,DBLP:conf/concur/MeydenW05}, which
 considers the informed semantics of knowledge, \ie, where all players
 know each other's strategy, this quantification is restricted to continuations
 that follow these strategies; in~\cite{DBLP:conf/mfcs/Puchala10}, which
 considers the uninformed semantics,  it quantifies over all possible
 continuations in the game.

We now prove a stronger result than the one announced
in~\cite{DBLP:conf/mfcs/Puchala10}, by allowing the use of either
existential or universal quantification on possible continuations
after a knowledge operator.
For an \ESL path formula $\psi$, we define
\begin{multline*}
\Phisynth\egdef
\Estrato[\var_{1}]{\obs_1}\ldots\Estrato[\var_{n}]{\obs_n}
\bind[\ag_1]{\var_1}\ldots\bind[\ag_n]{\var_n}\A\psi.  
\end{multline*}

Note that the outcome quantifier $\A$ quantifies on all possible
behaviours of the environment.
\begin{definition}
  \label{def-uninformed-dist-synth}
  The  epistemic distributed synthesis problem  with uninformed semantics is the following: given a \CGSi $\CGSi$ and an \ESL
  path formula $\psi$, decide whether
$\CGSi\models\Phisynth$.
\end{definition}

 Let \LTLK(\CTLsK) the set of path formulas obtained
by allowing in \LTL subformulas of the form $\Ka\phi$, with
$\phi\in\CTLsK$. The path quantifier from \CTLsK quantifies on all
possible futures, and is simulated in \ESL by an unbinding for all players
followed by an outcome quantifier.
Therefore with specifications $\psi$ in \LTLK(\CTLsK), all epistemic
subformulas are free. It follows that if the system $\CGSi$ is hierarchical, and we assume without loss of
generality that $\obsint(\obs_i)\supseteq \obsint(\obs_{i+1})$, then
$(\CGSi,\Phisynth)$ is a hierarchical instance of \ESL.

\begin{theorem}
  \label{theo-uninformed-dist-synth}
  The epistemic distributed synthesis problem from specifications in \LTLK(\CTLsK)
 with uninformed semantics is decidable on
  hierarchical systems.
\end{theorem}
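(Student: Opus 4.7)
The plan is to reduce the problem to the model-checking problem for hierarchical instances of \ESL, which is decidable by Theorem~\ref{theo-ESL}. Given a hierarchical \CGSi $\CGSi$ and a specification $\psi \in \LTLK(\CTLsK)$, the candidate formula is exactly $\Phisynth$ as defined above; correctness of the reduction is immediate from Definition~\ref{def-uninformed-dist-synth}, so the real work is to show that the pair $(\CGSi,\Phisynth)$ is a hierarchical instance of \ESL.

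First I would translate the specification $\psi \in \LTLK(\CTLsK)$ into an \ESL path formula. The only operators in $\LTLK(\CTLsK)$ not already present in \ESL are the \CTLs-style path quantifiers $\A$ (and their dual $\E$) appearing inside the scope of a knowledge operator. Following the discussion before Theorem~\ref{theo-uninformed-dist-synth}, each such path quantifier $\A$ is simulated in \ESL by first unbinding all players (using $\bind[\ag_1]{\unbind}\ldots\bind[\ag_n]{\unbind}\bind[e]{\unbind}$) and then applying the outcome quantifier $\Aout$; $\E$ is handled dually. Write $\tilde\psi$ for the resulting \ESL path formula; clearly $\CGSi\models\Phisynth$ in the sense of Definition~\ref{def-uninformed-dist-synth} if and only if $\CGSi\models\Phi'_{\synthrm}$, where $\Phi'_{\synthrm}$ is obtained by replacing $\psi$ with $\tilde\psi$ in $\Phisynth$.

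Next I would verify the two conditions of Definition~\ref{def-hierarchical}. For the observation hierarchy: the only strategy quantifiers in $\Phi'_{\synthrm}$ are the outermost $\Estrato[\var_i]{\obs_i}$, and by the assumption $\obsint(\obs_i) \supseteq \obsint(\obs_{i+1})$, these are ordered from coarsest to finest as we go down the syntactic tree, so the observation-hierarchy condition is satisfied; no further strategy quantifiers appear inside $\tilde\psi$. For freeness of epistemic subformulas: every knowledge operator in $\psi$ is applied to a $\CTLsK$ formula, so the corresponding epistemic subformula of $\tilde\psi$ has the form $\Ka\phi'$, where each outcome quantifier $\Aout$ occurring in $\phi'$ has been prefixed with a full unbinding $\bind[\ag_1]{\unbind}\ldots\bind[\ag_n]{\unbind}\bind[e]{\unbind}$. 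Thus at every occurrence of $\Aout$ inside $\phi'$, no agent is bound to a strategy, which means $\Agphi{\Aout\psi'}=\emptyset$ for all such subformulas. By Definition~\ref{def-unconstrained}, $\Ka\phi'$ is vacuously free in $\Phi'_{\synthrm}$.

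The main obstacle is precisely this last point: one must be careful that the \CTLsK path quantifiers inside a knowledge operator really do translate to unbindings of \emph{all} players (including the environment $e$), for otherwise some player would remain bound at the level of $\Aout$ to a strategy quantified outside the knowledge operator, breaking freeness. Once this is handled, $\Phi'_{\synthrm}$ is a hierarchical \ESL formula, the pair $(\CGSi,\Phi'_{\synthrm})$ is a hierarchical instance, and Theorem~\ref{theo-ESL} yields a decision procedure.
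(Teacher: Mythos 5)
Your proposal is correct and follows essentially the same route as the paper: translate the \CTLsK path quantifiers occurring under knowledge operators into a full unbinding followed by the outcome quantifier, observe that this makes every epistemic subformula free (the set of bound agents at each outcome quantifier inside a knowledge operator being empty), order the observations by refinement to satisfy the hierarchy condition, and invoke Theorem~\ref{theo-ESL}. Your write-up merely makes explicit the freeness check and the role of unbinding the environment, which the paper leaves implicit.
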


In fact we can deal with even richer specifications: as long as
hierarchy is not broken and epistemic subformulas remain free, it is
possible to re-quantify on agents' strategies inside an epistemic
formula. Take for instance formula
\[\psi=\F\Ka[\ag_n]\bind[\ag_1]{\unbind}\ldots\bind[\ag_n]{\unbind}\Estrato{\obs_n}\bind[\ag_n]{\var}\A\G\Ka[\ag_n]
  p\]
It says that
eventually, agent $a_n$ knows that she can change strategy 
 so that in all outcomes of this strategy, she will
always know that $p$ holds. If $\CGSi$ is hierarchical, then
$\Phisynth$ forms a hierarchical instance with $\CGSi$. Consider now formula
\[\psi=\F\Ka[\ag_i]\bind[\ag_1]{\unbind}\ldots\bind[\ag_n]{\unbind}\Astrato{\obs}\bind[\ag_j]{\var}\E\G\neg\Ka[\ag_j]
  p\]
which means that eventually agent $\ag_i$ knows that for any strategy
with observation $\obs$ that  agent $\ag_j$ may take, 
there is an outcome in which $\ag_j$ never  knows that $p$
holds. If $\obs$ is finer than $\obs_n$ (and thus all other $\obs_i$),
for instance if $\obs$ represents perfect information, then
hierarchy is preserved and we can solve distributed synthesis for this
specification. In addition, the semantics of our knowledge operator
takes into account the fact that agent $\ag_j$ changes observation power.


\subsection{Rational synthesis}
\label{sec-ratsyn}


Consider $\Agf=\{\ag_1,\ldots,\ag_n\}$, each player $\ag_i$ having
observation symbol $\obs_i$.
Given 
a global objective $\psi_g$ and  individual objectives
$\psi_i$ for each player $\ag_i$, define
\begin{multline*}
\Phirat\egdef  \Estrato[\var_{1}]{\obs_1}\ldots\Estrato[\var_{n}]{\obs_n}\bind[\ag_1]{\var_1}\ldots\bind[\ag_n]{\var_n}\A\psi_g\\
  \wedge
   \bigwedge_{i\in\{1,\ldots,n\}}\Big (\Astrato[\var_i]{\obs_i}\left(\bind[\ag_i]{\var_i}\,\A\psi_i\right
   )\impl   \A\psi_i\Big).
\end{multline*}
It is easy to see that $\Phirat$ expresses the existence of a solution
to the cooperative rational synthesis
problem~\cite{DBLP:journals/amai/KupfermanPV16,DBLP:conf/icalp/ConduracheFGR16}. However
this formula does not form hierarchical instances, even with
hierarchical systems. 
But the same argument used
in~\cite{BMMRV17} for Nash equilibria shows that $\Phirat$ is 
 equivalent to $\Phirat'$,  obtained from
$\Phirat$ by replacing each $\Astrato[\var_i]{\obs_i}$ with $\Astrato[\var_i]{\obs_p}$,
where $\obs_p$ represents perfect observation.

\begin{theorem}
  \label{theo-rat}
  Rational synthesis from \LTL(\CTLsK) specifications is decidable on
  hierarchical systems.
\end{theorem}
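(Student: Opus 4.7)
The plan is to reduce the problem to Theorem~\ref{theo-ESL} by showing that $(\Phirat',\CGSi)$ is a hierarchical instance of \ESL whenever $\CGSi$ is a hierarchical system. Since, as recalled just before the statement, $\Phirat$ is equivalent to $\Phirat'$ by the argument from~\cite{BMMRV17}, it suffices to model-check $\Phirat'$. Thus the entire proof consists in verifying the two conditions of Definition~\ref{def-hierarchical}: (i) observations of strategy quantifiers refine as we go down the syntactic tree, and (ii) every epistemic subformula is free.

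For condition (i), I would first reorder players so that, without loss of generality, $\obsint(\obs_1)\supseteq \obsint(\obs_2)\supseteq\ldots\supseteq\obsint(\obs_n)$, using the assumption that $\CGSi$ is hierarchical. Then the outermost block $\Estrato[\var_1]{\obs_1}\ldots\Estrato[\var_n]{\obs_n}$ clearly satisfies the refinement constraint. The only other strategy quantifiers are the $n$ innermost ones $\Astrato[\var_i]{\obs_p}$ coming from the Nash deviation subformulas of $\Phirat'$; because $\obs_p$ denotes perfect observation we have $\obsint(\obs_p)\subseteq\obsint(\obs_j)$ for every $j$, so each of these quantifiers observes more than every outer one. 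Since the specifications $\psi_g$ and $\psi_i$ live in \LTL(\CTLsK) and therefore contain no strategy quantifier at all, no further hierarchy constraint arises.

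For condition (ii), I would follow exactly the reasoning used in the proof of Theorem~\ref{theo-uninformed-dist-synth}: a \CTLsK path quantifier nested inside $\Ka$ is simulated in \ESL by first unbinding all players with $\bind[\ag_1]{\unbind}\ldots\bind[\ag_n]{\unbind}$ and then applying the outcome quantifier $\A$. Hence at the level of any $\A\psi'$ appearing inside an epistemic subformula $\Ka\phi$ of $\psi_g$ or $\psi_i$, the set $\Agphi{\A\psi'}$ is empty, and the freeness requirement of Definition~\ref{def-unconstrained} is vacuously satisfied.

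Putting these two verifications together, $(\Phirat',\CGSi)$ is a hierarchical instance, and Theorem~\ref{theo-ESL} gives decidability. The only genuinely non-trivial step is the first one, and the difficulty was already dissolved by the authors' preceding observation that replacing the natural deviation observation $\obs_i$ by the perfect observation $\obs_p$ is semantically harmless (via the Nash-equilibrium argument of~\cite{BMMRV17}); without this rewriting the inner universal quantifier would force a coarser observation below a finer one and thereby break hierarchicality. Once that trick is in place, the remainder is a mechanical check completely parallel to the distributed synthesis case.
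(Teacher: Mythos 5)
Your proposal is correct and follows essentially the same route as the paper: reduce to Theorem~\ref{theo-ESL} via the equivalence of $\Phirat$ with $\Phirat'$ (the perfect-observation rewriting of the deviation quantifiers, borrowed from the Nash-equilibrium argument of~\cite{BMMRV17}), and then check hierarchicality of $(\Phirat',\CGSi)$ exactly as in the distributed-synthesis case. You merely make explicit the two checks (observation refinement and freeness of epistemic subformulas) that the paper leaves implicit, and you correctly identify the $\obs_i\mapsto\obs_p$ replacement as the one genuinely non-mechanical step.
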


As in the case of distributed synthesis discussed before, we can in
fact handle more complex specifications, nesting knowledge and
strategy quantification.

\section{Discussion}
\label{discussion}

  In the uninformed semantics, 
players ignore each other's strategy, but they also ignore their 
 own, in the sense that they consider possible finite plays in
 which they act differently from what their strategy prescribes.
This is the usual semantics in epistemic strategic
  logics~\cite{Hoek03ATELstudialogica,DBLP:journals/fuin/JamrogaH04,epistemicSL,dima2010model,DBLP:conf/ijcai/BelardinelliLMR17,BelardinelliLMR17a},
  and in some situations it may be what one wants to model. For
  instance, an agent may  execute her strategy step
  by step without having access to what her strategy
  would have prescribed in alternative plays. In this case, 
  where the agent cannot know whether a possible play follows her
  strategy or not,  the uninformed semantics of knowledge is
  the right one.

  On the other hand it seems natural, especially formulated in these
  terms, to assume that an agent knows her own strategy. We describe
  how,  in the case where agents do not change strategies or
  observation along time, this semantics can be retrieved within the
  uninformed semantics.

Assume that 
  player $\ag$ is assigned some observation symbol $\obs_\ag$.  
  As pointed
  out in~\cite[p.16]{DBLP:conf/mfcs/Puchala10}, in the setting of
  synchronous perfect recall, letting a player know her strategy
  is equivalent to letting her remember her own actions.  To see this,
  assume 
 that
  finite plays also contain each joint action between two positions,
  and let $\obseq[\obs_\ag]'$ be such that $\pos_\init
  \jmov_1\pos_1\ldots\jmov_n\pos_n\obseq[\obs_\ag]'\pos_\init
  \jmov_1\pos_1\ldots\jmov_n\pos_n$ if for all $i\in\{1,\ldots,n\}$,
  $\jmov_a=\jmov'_a$ and $\pos_i\obseq[\obs_\ag]\pos'_i$.   Then, for a
  strategy $\strat$ of player $a$ and two finite plays
  $\fplay,\fplay'$ such that $\fplay\obseq[\obs_\ag]'\fplay'$, it
  holds that $\fplay$ is consistent with $a$ playing $\strat$ iff 
  $\fplay'$ is consistent with $a$ playing $\strat$. This is because
for every $i<n$ we have $\fplay_{\leq i}\obseq[\obs_\ag]'\fplay'_{\leq
  i}$ (perfect recall),
the next action taken by player $a$ is the same after $\fplay'_{\leq
  i}$ and $\fplay'_{\leq i}$ (definition of $\obseq[\obs_\ag]'$), and $\strat$ being an
$\obs_\ag$-strategy it is defined similarly on
both prefixes.

In our setting, moves are not part of finite plays. To simulate the
relation $\obseq[\obs_\ag]'$ in which agent $\ag$ remembers her own actions,
one can put inside the positions of game structures the information
of the last joint move played, possibly duplicating some positions. 
One then refines each
observation $\obs_\ag$ to only consider two positions equivalent if
they contain the same move for player $a$. We then get a semantics
where each agent remembers her own actions which, if agents do not
change strategy or observation through time, is equivalent to
knowing her own strategy. Note that doing so, a system can only be
hierarchical if more informed players also observe all actions of less
informed ones.

In the general case, where players can change strategies and observations, we
do not know to what extent we can deal with the variant of the
uninformed semantics where players know there own strategies. We leave
this for future work.

\section{Conclusion}
\label{sec-conclusion}

In this paper we have discussed two possible semantics of knowledge
when combined with strategies, the informed and uninformed
one. Focusing on the latter, we introduced \ESL, a very expressive
logic to reason about knowledge and strategies in distributed systems
which can handle sophisticated epistemic variants of game-theoretic
notions such as Nash equilibria.
In addition, it is the first  logic of knowledge and strategies that
permits reasoning about agents whose observation power may change.
This is a very natural phenomenon: one may think of a program that receives
access to previously hidden variables, or a robot that loses a sensor.

We solved the model-checking problem
of our logic for hierarchical instances. To do so, we introduced an
extension of \QCTLs with epistemic operators and  operators of
observation change, and we developed an automata construction based
on tree automata and $k$-trees.
This is the first decidability result for a logic of
strategies, knowledge and time with perfect recall on systems with
hierarchical information. Besides, it is also the first result for
epistemic strategic logics that takes into account dynamic changes of observation.

Our result implies that   distributed synthesis and
rational synthesis for epistemic
temporal specifications and  the
uninformed semantics of knowledge are decidable on hierarchical
systems.
Similar results
for other solution concepts, such as subgame-perfect equilibria or
admissible strategies~\cite{DBLP:conf/concur/BrenguierPRS17}, could
be
obtained similarly.

\section{Acknowledgements}
This project has received funding from the European Union's Horizon 2020 research  and innovation programme under the Marie Sklodowska-Curie grant agreement No 709188.
\UElogo

\bibliography{biblio}

\begin{thebibliography}{}

\bibitem[\protect\citeauthoryear{Alur, Henzinger, and
  Kupferman}{2002}]{DBLP:journals/jacm/AlurHK02}
Alur, R.; Henzinger, T.~A.; and Kupferman, O.
\newblock 2002.
\newblock Alternating-time temporal logic.
\newblock {\em JACM} 49(5):672--713.

\bibitem[\protect\citeauthoryear{Barri\`ere \bgroup et al\mbox.\egroup
  }{2018}]{barriere:2018}
Barri\`ere, A.; Maubert, B.; Murano, A.; and Rubin, S.
\newblock 2018.
\newblock Changing observations in epistemic temporal logic.
\newblock {\em arXiv}.

\bibitem[\protect\citeauthoryear{Belardinelli \bgroup et al\mbox.\egroup
  }{2017a}]{DBLP:conf/ijcai/BelardinelliLMR17}
Belardinelli, F.; Lomuscio, A.; Murano, A.; and Rubin, S.
\newblock 2017a.
\newblock Verification of broadcasting multi-agent systems against an epistemic
  strategy logic.
\newblock In {\em IJCAI'17},  91--97.

\bibitem[\protect\citeauthoryear{Belardinelli \bgroup et al\mbox.\egroup
  }{2017b}]{BelardinelliLMR17a}
Belardinelli, F.; Lomuscio, A.; Murano, A.; and Rubin, S.
\newblock 2017b.
\newblock Verification of multi-agent systems with imperfect information and
  public actions.
\newblock In {\em {AAMAS}'17},  1268--1276.

\bibitem[\protect\citeauthoryear{Belardinelli}{2015}]{epistemicSL}
Belardinelli, F.
\newblock 2015.
\newblock A logic of knowledge and strategies with imperfect information.
\newblock In {\em LAMAS'15},  1--15.

\bibitem[\protect\citeauthoryear{Berthon \bgroup et al\mbox.\egroup
  }{2017}]{BMMRV17}
Berthon, R.; Maubert, B.; Murano, A.; Rubin, S.; and Vardi, M.~Y.
\newblock 2017.
\newblock {Strategy Logic with imperfect information}.
\newblock In {\em LICS'17}.

\bibitem[\protect\citeauthoryear{Bozzelli, Maubert, and
  Pinchinat}{2015}]{DBLP:journals/iandc/BozzelliMP15}
Bozzelli, L.; Maubert, B.; and Pinchinat, S.
\newblock 2015.
\newblock Uniform strategies, rational relations and jumping automata.
\newblock {\em Inf. Comput.} 242:80--107.

\bibitem[\protect\citeauthoryear{Brafman \bgroup et al\mbox.\egroup
  }{1997}]{brafman1997applications}
Brafman, R.~I.; Latombe, J.-C.; Moses, Y.; and Shoham, Y.
\newblock 1997.
\newblock Applications of a logic of knowledge to motion planning under
  uncertainty.
\newblock {\em Journal of the ACM (JACM)} 44(5):633--668.

\bibitem[\protect\citeauthoryear{Brenguier \bgroup et al\mbox.\egroup
  }{2017}]{DBLP:conf/concur/BrenguierPRS17}
Brenguier, R.; Pauly, A.; Raskin, J.; and Sankur, O.
\newblock 2017.
\newblock Admissibility in games with imperfect information (invited talk).
\newblock In {\em {CONCUR}'17},  2:1--2:23.

\bibitem[\protect\citeauthoryear{Cerm{\'{a}}k \bgroup et al\mbox.\egroup
  }{2014}]{DBLP:conf/cav/CermakLMM14}
Cerm{\'{a}}k, P.; Lomuscio, A.; Mogavero, F.; and Murano, A.
\newblock 2014.
\newblock {MCMAS-SLK:} {A} model checker for the verification of strategy logic
  specifications.
\newblock In {\em CAV'14},  525--532.

\bibitem[\protect\citeauthoryear{Chatterjee, Henzinger, and
  Piterman}{2010}]{DBLP:journals/iandc/ChatterjeeHP10}
Chatterjee, K.; Henzinger, T.~A.; and Piterman, N.
\newblock 2010.
\newblock {S}trategy {L}ogic.
\newblock {\em Inf. Comput.} 208(6):677--693.

\bibitem[\protect\citeauthoryear{Condurache \bgroup et al\mbox.\egroup
  }{2016}]{DBLP:conf/icalp/ConduracheFGR16}
Condurache, R.; Filiot, E.; Gentilini, R.; and Raskin, J.
\newblock 2016.
\newblock The complexity of rational synthesis.
\newblock In {\em ICALP'16},  121:1--121:15.

\bibitem[\protect\citeauthoryear{Dima, Enea, and Guelev}{2010}]{dima2010model}
Dima, C.; Enea, C.; and Guelev, D.~P.
\newblock 2010.
\newblock Model-checking an {A}lternating-time {T}emporal {L}ogic with
  knowledge, imperfect information, perfect recall and communicating
  coalitions.
\newblock {\em EPTCS} 25.

\bibitem[\protect\citeauthoryear{Emerson and
  Sistla}{1984}]{emerson1984deciding}
Emerson, E.~A., and Sistla, A.~P.
\newblock 1984.
\newblock Deciding branching time logic.
\newblock In {\em STOC'84},  14--24.

\bibitem[\protect\citeauthoryear{Fagin \bgroup et al\mbox.\egroup
  }{1995}]{fagin1995reasoning}
Fagin, R.; Halpern, J.~Y.; Moses, Y.; and Vardi, M.~Y.
\newblock 1995.
\newblock {\em Reasoning about knowledge}.
\newblock MIT press Cambridge.

\bibitem[\protect\citeauthoryear{French}{2001}]{french2001decidability}
French, T.
\newblock 2001.
\newblock Decidability of quantifed propositional branching time logics.
\newblock In {\em AJCAI'01},  165--176.

\bibitem[\protect\citeauthoryear{Halpern and
  Moses}{1990}]{halpern1990knowledge}
Halpern, J.~Y., and Moses, Y.
\newblock 1990.
\newblock Knowledge and common knowledge in a distributed environment.
\newblock {\em Journal of the ACM (JACM)} 37(3):549--587.

\bibitem[\protect\citeauthoryear{Halpern and
  O'Neill}{2005}]{DBLP:journals/jcs/HalpernO05}
Halpern, J.~Y., and O'Neill, K.~R.
\newblock 2005.
\newblock Anonymity and information hiding in multiagent systems.
\newblock {\em Journal of Computer Security} 13(3):483--512.

\bibitem[\protect\citeauthoryear{Huang and Van
  Der~Meyden}{2014}]{huang2014temporal}
Huang, X., and Van Der~Meyden, R.
\newblock 2014.
\newblock A temporal logic of strategic knowledge.
\newblock In {\em KR}.

\bibitem[\protect\citeauthoryear{Jamroga and van~der
  Hoek}{2004}]{DBLP:journals/fuin/JamrogaH04}
Jamroga, W., and van~der Hoek, W.
\newblock 2004.
\newblock Agents that know how to play.
\newblock {\em Fundam. Inform.} 63(2-3):185--219.

\bibitem[\protect\citeauthoryear{Knight and Maubert}{2015}]{knight2015dealing}
Knight, S., and Maubert, B.
\newblock 2015.
\newblock Dealing with imperfect information in strategy logic.
\newblock In {\em SR}.

\bibitem[\protect\citeauthoryear{Kupferman \bgroup et al\mbox.\egroup
  }{2000}]{KMTV00}
Kupferman, O.; Madhusudan, P.; Thiagarajan, P.~S.; and Vardi, M.~Y.
\newblock 2000.
\newblock Open systems in reactive environments: Control and synthesis.
\newblock In {\em CONCUR'00},  92--107.

\bibitem[\protect\citeauthoryear{Kupferman, Perelli, and
  Vardi}{2016}]{DBLP:journals/amai/KupfermanPV16}
Kupferman, O.; Perelli, G.; and Vardi, M.~Y.
\newblock 2016.
\newblock Synthesis with rational environments.
\newblock {\em Ann. Math. Artif. Intell.} 78(1):3--20.

\bibitem[\protect\citeauthoryear{Kupferman}{1995}]{Kup95}
Kupferman, O.
\newblock 1995.
\newblock Augmenting branching temporal logics with existential quantification
  over atomic propositions.
\newblock In {\em CAV'95},  325--338.

\bibitem[\protect\citeauthoryear{Ladner and Reif}{1986}]{ladner1986logic}
Ladner, R.~E., and Reif, J.~H.
\newblock 1986.
\newblock {The Logic of Distributed Protocols}.
\newblock In {\em TARK'86},  207--222.

\bibitem[\protect\citeauthoryear{Laroussinie and
  Markey}{2014}]{DBLP:journals/corr/LaroussinieM14}
Laroussinie, F., and Markey, N.
\newblock 2014.
\newblock Quantified {CTL:} expressiveness and complexity.
\newblock {\em LMCS} 10(4).

\bibitem[\protect\citeauthoryear{Laroussinie and
  Markey}{2015}]{DBLP:journals/iandc/LaroussinieM15}
Laroussinie, F., and Markey, N.
\newblock 2015.
\newblock Augmenting {ATL} with strategy contexts.
\newblock {\em IC} 245:98--123.

\bibitem[\protect\citeauthoryear{Mogavero \bgroup et al\mbox.\egroup
  }{2014}]{DBLP:journals/tocl/MogaveroMPV14}
Mogavero, F.; Murano, A.; Perelli, G.; and Vardi, M.~Y.
\newblock 2014.
\newblock Reasoning about strategies: On the model-checking problem.
\newblock {\em TOCL} 15(4):34:1--34:47.

\bibitem[\protect\citeauthoryear{Neiger and Bazzi}{1992}]{neiger1992using}
Neiger, G., and Bazzi, R.
\newblock 1992.
\newblock Using knowledge to optimally achieve coordination in distributed
  systems.
\newblock In {\em {TARK}'92},  43--59.

\bibitem[\protect\citeauthoryear{Peterson and
  Reif}{1979}]{DBLP:conf/focs/PetersonR79}
Peterson, G.~L., and Reif, J.~H.
\newblock 1979.
\newblock Multiple-person alternation.
\newblock In {\em SFCS'79},  348--363.

\bibitem[\protect\citeauthoryear{Pnueli and Rosner}{1990}]{PR90}
Pnueli, A., and Rosner, R.
\newblock 1990.
\newblock {Distributed reactive systems are hard to synthesize}.
\newblock In {\em FOCS'90},  746--757.

\bibitem[\protect\citeauthoryear{Puchala}{2010}]{DBLP:conf/mfcs/Puchala10}
Puchala, B.
\newblock 2010.
\newblock Asynchronous omega-regular games with partial information.
\newblock In {\em MFCS'10},  592--603.

\bibitem[\protect\citeauthoryear{Reif}{1984}]{reif84}
Reif, J.~H.
\newblock 1984.
\newblock {The complexity of two-player games of incomplete information}.
\newblock {\em JCSS} 29(2):274--301.

\bibitem[\protect\citeauthoryear{van~der Hoek and
  Wooldridge}{2003}]{Hoek03ATELstudialogica}
van~der Hoek, W., and Wooldridge, M.
\newblock 2003.
\newblock Cooperation, knowledge and time: {A}lternating-time {T}emporal
  {E}pistemic {L}ogic and its applications.
\newblock {\em Studia Logica} 75(1):125--157.

\bibitem[\protect\citeauthoryear{van~der Meyden and
  Shilov}{1999}]{DBLP:conf/fsttcs/MeydenS99}
van~der Meyden, R., and Shilov, N.~V.
\newblock 1999.
\newblock Model checking knowledge and time in systems with perfect recall
  (extended abstract).
\newblock In {\em FSTTCS},  432--445.

\bibitem[\protect\citeauthoryear{van~der Meyden and
  Su}{2004}]{DBLP:conf/csfw/MeydenS04}
van~der Meyden, R., and Su, K.
\newblock 2004.
\newblock Symbolic model checking the knowledge of the dining cryptographers.
\newblock In {\em {CSFW'17}},  280--291.

\bibitem[\protect\citeauthoryear{van~der Meyden and
  Vardi}{1998}]{van1998synthesis}
van~der Meyden, R., and Vardi, M.~Y.
\newblock 1998.
\newblock Synthesis from knowledge-based specifications.
\newblock In {\em CONCUR'98}.
\newblock  34--49.

\bibitem[\protect\citeauthoryear{van~der Meyden and
  Wilke}{2005}]{DBLP:conf/concur/MeydenW05}
van~der Meyden, R., and Wilke, T.
\newblock 2005.
\newblock Synthesis of distributed systems from knowledge-based specifications.
\newblock In {\em CONCUR'05},  562--576.

\bibitem[\protect\citeauthoryear{van~der
  Meyden}{1998}]{DBLP:journals/iandc/Meyden98}
van~der Meyden, R.
\newblock 1998.
\newblock Common knowledge and update in finite environments.
\newblock {\em Inf. Comput.} 140(2):115--157.

\end{thebibliography}
\bibliographystyle{aaai}

\end{document}